\title{Different Forms of Imbalance in Strongly Playable Discrete
 Games II: Multi-Player RPS Games}
\author{Itai Maimon}
\affil{Department of Mathematics \\
University of California San Diego \\
La Jolla, CA 92093-0112 (USA) \\
\{\tt imaimon\}@ucsd.edu}
\date{\today}
\begin{document}

\maketitle
\begin{abstract}

Classic Rock-Paper-Scissors, $RPS$, has seen many variants and generalizations in the past several years. In the previous paper, we defined playability and balance for games. We used these definitions to show that different forms of imbalance agree on the most balanced and least balanced form of playable two-player $n$-object $RPS$ games, referred to as $(2,n)$-$RPS$. We reintroduce these definitions here and show that, given a conjecture, the generalization of this game for $m<50$ players is a strongly playable $RPS$ game. We also show that this game maximizes these forms of imbalance in the limit as the number of players goes to infinity.

\noindent

$\,$

\noindent
\end{abstract}

\tableofcontents

\newpage

\section{Introduction: Constructions of Imbalance, Playability and Blow-ups}
Classic Rock-Paper-Scissors, $RPS$, has seen many variants and generalizations in the past several years \cite{SpiroSuryaZeng2022_SemiRestrictedRPS}. In the previous paper, we defined playability and balance for games and used these definitions to show that different forms of imbalance agree on the most balanced and least balanced form of playable two-player $n$-object $RPS$ games, referred to as $(2,n)$-$RPS$
\cite{PreviousPaper}. We reintroduce these definitions here and show that a generalization of this game for $m<50$ players is a strongly playable $(m,n)$-$RPS$. We also show that this game maximizes these forms of imbalance in the limit as the number of players goes to infinity.

Intuitively, a maximally imbalanced $RPS$ game should contain an object that beats all other objects. In this case, the only undominated strategy would be to play said object. On the other hand, broadly, we say an $RPS$ game is playable if there is a Nash equilibrium such that each object has a positive probability of being played. This opposition of definitions leads us to the question of what is the maximally imbalanced strongly playable $(m,n)$-$RPS$ game. In the two-player case, we identified this game and used its existence to justify a connection between imbalance statistics for games and inequality statistics for populations, in that while measures can broadly disagree \cite{Cowell2000_MeasurementOfInequality}, they should agree on the maximally balanced/equitable and minimally balanced/least equitable examples, respectively.

\subsubsection*{Definitions of Multiplayer $RPS$ Games}
An $(m,n)$-$RPS$ game is a totally symmetric discrete zero-sum win-lose game. Here we say that a game is win-lose if each player either wins, receiving the same payoff as all other winners, or loses, receiving the same payoff as all other losers. In each step of an $(m,n)$-$RPS$ game, each player chooses one of the $n$ objects. For any multi-set of choices, the rules of the step determine a unique winning object. For instance, a $(3,3)$-$RPS$ may contain objects $R,P,S$ and players $l_1,l_2,l_3$. In this game, each multi-set of three objects $\{\!\!\{R, R, R\}\!\!\}, \{\!\!\{R, R, P\}\!\!\},\dots $ determines a unique object that wins that step. For instance, if for multi-set $\{\!\!\{R, R, P\}\!\!\}$, object $R$ wins, then in an instance of this game where $l_1$ chose $P$ and $l_2,l_3$ chose $R$ then $l_1$ would lose and both $l_2$ and $l_3$ would win. 

We can then define a step of this game by a map, $\phi$, that chooses a winner given a multi-set. More precisely, $\phi$ is a map from the collection of multi-sets of $m$ objects to the set of all objects, such that for a multi-set $c$, $\phi(c)\in c$. When two or more players choose the same winning object, we say they tied for the win. In the above example, $l_2,l_3$ would tie for the win, and $l_1$ would lose. As $RPS$ games usually are replayed in the event of a tie, if we have $m'$ tied winners, those players would take another step by playing with $n$ objects and $m'$ players until they determine a unique winner.

In an instance of the game, only a single player wins, and $(m-1)$ players lose. For the game to remain sum-zero, without loss of generality, wins provide a pay-off of $(m-1)$ and losses a payoff of $-1$. In the case of an $m'$-way tie after a single step, the expected payoff of the second step is $\frac{m-m'}{m'}$. As this is a symmetric game without collusion, a priori, of a choice of known strategies, the outcome of all later steps is equivalent to selecting a winner uniformly at random. We can then define the effective payoff tensor of the first step of an $(m,n)$-$RPS$ without fully considering the actions in later steps \cite{tensornplayer}. This alternative payoff leads to the following definition of a $(m,n)$-$RPS$ that we will use for the remainder of the discussion.

\begin{defi}[ 
Alternative form of $(m,n)$-$RPS$]
We define an alternative form of $(m,n)$-$RPS$, $G$, as a symmetric win-lose sum-zero game with $m$ players and $n$ pure strategies, played without collusion between any players. Where for an $m'$ way tie all winners receive a payoff of $\frac{m-m'}{m'}$ and all losers receive a payoff of $-1$.

\end{defi}

\subsection{Playablity and Imbalance}

Given the above construction of $(m,n)$-$RPS$, we can now define what it means for an $RPS$ game to be playable as described in the prequel.

\begin{defi}[$k$-Playable]
For $k>0$, an $RPS$ game is $k$-playable if a Nash equilibrium, $N$, exists, such that each object has a positive probability of being played by $k$ players. In other words, in $N$, for each object, $o$, there are at least $k$ players, $p^1_o\dots p^k_o$, whose mixed strategy has a positive probability of playing $o$. In this case, we say that $o^k$ sees play by $p^1_o\dots p^k_o$ in Nash equilibrium $N$. A game is $k$-strongly playable if each Nash equilibrium has this property. On the other hand, a game is $k$-weakly playable if for each object $o$ there is a Nash equilibrium, $N_o$, in which $o^k$ sees play. We say a game is playable if it is $1$-playable, and similarly define a game as strongly/weakly playable if it is $1$-strongly/$1$-weakly playable.
\end{defi}

 The reasoning behind these definitions is to restrict ourselves to games in which each pure strategy is valid for some player to consider under a given meta-game. The three different forms of playability then define three different ways an $RPS$ game can have every strategy \textit{see play} in a competitive environment. For weakly playable games, each strategy has a meta-game in which it is played with positive probability. In contrast, for playable games, there is a meta-game such that each strategy is played with a positive probability. Finally, in strongly playable games, each meta-game must have each strategy played with positive probability. In the prequel, we further generalized this definition for possibly asymmetric and non-discrete games.

The imbalanced statistics that we defined in the prequel are based on the idea that in a balanced symmetric game, each strategy should be as good as any other. Thus, each player should be able to play each object with a uniform distribution. Our first type of imbalance statistic, referred to as uniform strategy imbalances, is based on the distribution of the expected value for the remaining player if all other players play with a uniform strategy. We call these the uniform expected payoffs of each strategy. In a balanced game, in this scenario, we would expect each object to have the same payoff. Therefore, if we consider these expected values as a distribution giving equal weight to each object's uniform expected payoff, then the balanced game should have a Dirac-delta distribution of expected payoffs. Thus, an imbalanced measure can be defined as a measure of distance away from the Dirac-delta distribution. 

This intuition leads us to the following three definitions:

\begin{defi}[Uniform Imbalance of Variance]
For two symmetric games, $G_1$ and $G_2$, $G_1$ is more uniformly imbalanced in variance, referred to as $UI_v$, than $G_2$ if the variance of the uniform expected payoffs in $G_1$ is higher than the variance of the uniform expected payoffs in $G_2$. 
\end{defi}

\begin{defi}[Uniform Imbalance of Entropy]
 For two symmetric games, $G_1$ and $G_2$, $G_1$ is more uniformly imbalanced in entropy, referred to as $UI_e$, than $G_2$ if the entropy of the uniform expected payoffs as a random variable in $G_1$ is higher than the entropy of the uniform expected payoffs in $G_2$. Where the entropy, $H$, of a probability distribution, $P$, over a measurable set $D$, is defined as \cite{Shannon1948MathematicalTheoryCommunication}:
 \begin{equation}
     H(P)=-\int_D P(x)\ln(P(x))dx
 \end{equation}
\end{defi}

\begin{defi}[\alpha-Thiel Entropy Imbalance ]
For a symmetric game, $G_1$ with bounded payoffs, define $\tilde{G}$ as a game in which we scale the payoffs of each outcome in $G$ by a positive constant $c_1$ and then add a constant $c_2$ to each payoff, so that the uniform expected payoffs has mean $1$ and minimum value $0<\alpha<1$. Note that this transformation does not change the Nash Equilibria from $G_1$ to $\tilde{G}_1$. 
For two games with bounded payoffs, $G_1$, $G_2$, $G_1$ is more uniformly imbalanced in $\alpha$-Theil $T$ entropy, referred to as $UI_{t_\alpha}$, than $G_2$ if the Thiel-T index of the random distribution over uniform expected payoffs for $\tilde{G_1}$ is higher than the Thiel-T index over $\tilde{G}_2$. Where we define the Thiel-T index of a distribution, $P(x)$ over positive values with mean $1$ as:
\begin{equation}
    T(P)=\int_0^\infty P(x) x\ln(x) dx
\end{equation}
Over discrete random variables defined on a set $D$, with probability distribution $P(x)$ as:
\begin{equation}
    T(P)=\sum_{i\in D} P(x_i)x_i\ln(x_i)
\end{equation}
\end{defi}

The second form of imbalance definitions is based on the Nash equilibria of the underlying game. A game that contains a Nash equilibrium where each player plays a uniform distribution over all strategies should be considered quite balanced. We therefore define the Nash probability imbalances as those based on the worst-case (most-balanced) Nash equilibrium's deviation from uniformity. We similarly define the symmetric Nash probability imbalances in the same way, restricted to only symmetric Nash equilibria. This leads to the following two definitions:

\begin{defi}[Minimal Nash Equilibria Entropy Imbalance]
     Given two $m$-player games, $G_1$, and $G_2$, $G_1$ is more Nash equilibria entropy imbalanced, referred to as $N_e$, than $G_2$ if in the Nash equilibrium, which maximizes the sum of the entropies for all players of $G_1$, and $G_2$, $n_1$, $n_2$ respectively, $H(n_1)<H(n_2)$. Where $H(n_i)$ is the sum of the entropy of each distribution of each player in the set of distributions $n_i$. For instance, if $P_j(x)$ is the distribution over pure strategies for player $p_j$ in $n_i$, then given $m$ players, $p_1 \dots p_m$, each with their own strategy set $D_j$: 
     
     \begin{equation}
         H(n_i)=-\sum_{j=1}^m \int_{D_j} P_j(x)\ln(P_j(x))dx
     \end{equation} 
\end{defi}

When playing an imbalanced game, we would expect each player to choose the most imbalanced object, and consequently, tie often. Conversely, we would generally expect that in a balanced game, multiple players would choose the same option much less frequently. We can then state that the number of ties in our symmetric equilibria of a balanced game should be minimized, and the number of ties in such an equilibrium should be maximized in imbalanced games. This leads to the following definition:
\begin{defi}[Maximal Nash Equilibria Ties Imbalance]
     Given two $m$-player symmetric discrete games, $G_1$, and $G_2$, $G_1$ is more Nash maximal ties imbalanced, $N_t$, than $G_2$ if in the symmetric Nash equilibrium, which minimizes ties for $G_1$, and $G_2$, $n_1$, $n_2$ respectively, $T(n_1)>T(n_2)$, where $T(n_i)$ is the expected number of ties when players play the set of strategies $n_i$ over $d$ pure strategies. For instance, if $v_{o}$ is the probability of a player playing $o$ in the Nash equilibrium $n_i$ then, 
     \begin{equation}
         T(n_i)=\sum_{o=1}^d {v_o}^m
     \end{equation}   
\end{defi}     

As many of these imbalance statistics are Schur-convex in their variables \cite{Majorization2e}, we can define classes of imbalance statistics, which we call Schur-uniform or Schur-distributional imbalance statistics, respectively. More precisely, we can define:

\begin{defi}[Multiplayer Schur-Uniform Class for $RPS$ games]
    A uniform strategy imbalance, $\tau$, is in the multiplayer Schur-uniform class, $S_u$, if:

    \begin{enumerate}
        \item $\tau$ is definable over an $(m,n)$-$RPS$ game for any $n,m\in \N$.
        \item  When $\tau$ is applied to the $(m,n)$-$RPS$ games, $G_1$, and $G_2$ with sequences of uniform expected payoffs $\mathbf{P}_{G_1,in},\mathbf{P}_{G_2,in}$ respectively, if $\mathbf{P}_{G_1,in}$ majorizes $\mathbf{P}_{G_2,in}$ then $G_1$ is greater than or equally as imbalanced as $G_2$.
    \end{enumerate}
    If, when $\mathbf{P}_{G_1,in}$ strictly majorizes $\mathbf{P}_{G_2,in}$, $G_1$ is strictly more imbalanced than $G_2$ we say that said imbalance definition is in the strict $S_u$ class.
\end{defi}

Similarly, we have:

\begin{defi}[Multiplayer Schur-Distributional Class for $RPS$ games]
    A Nash probability imbalance, $\tau$, is in the Schur-distributional class, $S_N$ or symmetric Schur-distributional Class $S_{N_s}$, if:

    \begin{enumerate}
        \item $\tau$ is definable over a $(m,n)$-$RPS$ game for any $(m,n)\in \N$
        \item  When $\tau$ is applied to  $(m,n)$-$RPS$ games, $G_1$, and $G_2$ with worst-case Nash equilibrium/symmetric Nash equilbrium probabilities for player $j$ playing object $o$ of $n^j_{G_1,o}$ and $n^j_{G_2,o}$ respectively, if the sequence of all such player-object probabilities for $G_1$ majorizes that sequence for $G_2$ then $G_1$ is greater than or equally imbalanced as $G_2$.
    \end{enumerate}
    If, when a sequence of probabilities for $G_1$ strictly majorizes $G_2$'s sequence of probabilities implies $G_1$ is more imbalanced than $G_2$, we say that said imbalance definition is in the strict $S_N$/$S_{N_s}$ class.
\end{defi} 

All but uniform imbalance in entropy, $UI_e$, are in these Schur classes of imbalance definitions, as seen by the fact that the functions defining these statistics are symmetric and convex, respectively.

Therefore, we can show that a game is very imbalanced if the uniform expected payoffs in a playable $RPS$ game are close to the maximally majorizing set of expected payoffs for such games. Similarly, it maximizes the Nash distributional/symmetric Nash distributional imbalance if the set of worst-case Nash equilibria player probabilities, or symmetric Nash equilibria player probabilities, is close to the maximally majorizing set of such probabilities.

 Generally, one can consider a game, $G$, more imbalanced than $H$ if the \textit{good} objects in $G$ win in more scenarios than the \textit{good} objects of $H$, and the \textit{bad} objects in $G$ lose in more scenarios than the \textit{bad} objects in $H$. It is easy to make arbitrarily imbalanced games; however, in playable games, all objects can \textit{be played}, and so, there must be a Nash equilibrium where even the \textit{worst} object must have a probability of being played.

\subsection{Constructions of Multiplayer Blow-Ups}

In \cite{PreviousPaper}, we proved that there are no playable two-player $RPS$ games with an even number of objects. We also found a playable $(2,2n+1)$-$RPS$ that is least balanced according to all given imbalance definitions. This game is defined on a set of objects $r_0\dots r_{n-1},p_0,\dots p_{n-1},s=r_n=p_n$, such that, $r_i$ beats all elements $r_j,p_k$ for $i>j,i>k$, $p_i$ loses to all elements $r_j,p_k$ for $i>j,i>k$, and lastly $p_i$ beats $r_i$. The structure of this game can be simplified by using the following definition of a blow-up:

\begin{defi}[Symmetric Blow-up]
The blow-up of two symmetric discrete two-player games, $G, H$, over $l\in G$, denoted as $G\#_{l} H$, is a game with the pure strategies in $(G\setminus \{l\}) \cup H$. If we let $(G\setminus l)$ be the payoff matrix of $G$ without the row and column corresponding to $l$, and $G_l$ be the $(|G|-1)\times |H|$ matrix with each column being a copy of the column corresponding to $l$ in the payoff matrix of $G$, then $G\#_{l} H$ has payoff matrix:
        \begin{equation}
            \begin{bmatrix}
                (G\setminus l) & G_l \\
                (G_l)^\top& H 
            \end{bmatrix}
        \end{equation}
        
        This payoff matrix implies that: 
        \begin{itemize}
            \item If both players play strategies in $H$, the payoff matrix is the payoff matrix in $H$.
            \item  If both players play in $G$, with neither player playing $l$, then the payoff matrix is the payoff matrix in $G$.
            \item If player $p$ plays in $k\in H$, and $q$ plays $i\in G\setminus l$ the payoff is the same as those in $G$ if $p$ played $l$ and $q$ played $i$, and vice versa. 
        \end{itemize} 
\end{defi}

We can see that blow-ups are related to both modular products of graphs and lexicographical products of games \cite{Salha2022MaxDecomposability}\cite{GoldbergMoon1970CompositionTournaments}. Using blow-ups we can construct the imbalanced $(2,2n+1)$-$RPS$ as: $3$-$RPS\#_{S}(3$-$RPS\#_{S}\dots)$ $n$ times. This construction reveals that the least balanced playable $(2,2n+2m+1)$-$RPS$ can be constructed by blowing up the least balanced  playable $(2,2n+1)$-$RPS$, $G$, with the least balanced but playable $(2,2m+1)$-$RPS$, $H$, at the most balanced object, $s\in G$.

Unlike in the two-player case, there is no canonical blow-up in a game with multiple players, which comes about because the distribution of payoffs is less well-defined. For instance, consider a four player form of  $G\#_l H$, if two players are playing in the sub-game, $H$, and two others are playing objects in $G\setminus l$, it is unclear if we should use the results as if the two players played $l$ or as if the two players played in $H$, or some mix of the two. It is also clear that blow-ups for $m$-player games will only be possible if the sub-game can be played with any subset of players, so that the outcomes in the subgame could be well-defined. We provide the following definition of a win/lose game and an $m$-fully extended game in order to make the construction:

\begin{defi}[Win/Lose Games]
    A win/lose game is one in which a player, $q$, either wins or loses. The outcomes for all the winners are the same, as are the payoffs for all the losers. An all-way tie is, by convention, a win for all players. 
\end{defi} 

\begin{defi}[Fully-Extended Games]
    An $m$-fully-extended game, $G$, is a game that is playable with any subset of $m$ given players. Precisely, a game $G$, with players $p_1 \dots p_m$, and corresponding strategy sets, $S_1,\dots, S_m$, is a map from each finite subset of players, $U=\{p_{u_1}\dots p_{u_k}\}$, to a set of payoff tensors for each player between the other players in the subset $\{\ T^{p_{u_i}}_{p_{u_1},\dots p_{u_{i-1}}, p_{u_i},\dots p_{u_k}}\dots \} $. These payoff tensors describe the payoffs for each player in $U$ if only those given players played the game $G$. 
    \end{defi}
For completeness, an $m$-fully-extended  game, $G$, is said to contain an $m'$-fully extended game, $H$, if there is a subset $U_H$ of players of $G$ of size $m'$ such that there is a bijection, $\phi$, from the strategy set of the players in $U_H$ and the players in $H$ that respects the payoff tensors between players $U_H \subset G$.

An $m$-fully-extended  win/lose symmetric game, $G$, with strategy set, $S$, can be described by a map, $g$, from the collection of all finite multi-sets of $S$ of size up to $m$, $s_m$ to $P(S)\times \R\times \R$, where $P(S)$ is the power set of $S$ which defines the set of winners, $W$. For example, say:
\begin{equation}
    g(c)=(W_c,P_{W_c},P_{L_c})
\end{equation}
Then any player who played an element of $W_c\subseteq C$ wins a payoff of $P_{W_c}$ and otherwise players receive a payoff of $P_{L_c}$.
For instance, odd-one out is a symmetric win/lose $\infty$-extended game. This game contains two pure strategies, $a$, and $b$, such that if $m'$ players choose $a$, and $m''$ players choose $b$, then if $m'<m''$, the players who chose $a$ win. If the reverse occurs, then the players who chose $b$ win. If the numbers are precisely the same, then it is an all-way tie.

These definitions allow us to provide the following construction of the multi-player blow-up:

\begin{defi}[ Multi-player Blow-Up]
For a sum-zero $m$-player win/lose game, $G$, and sum-zero $m$-fully-extended, win/lose game, $H$ the blow-up at $l_1\dots l_m\in G$ is denoted as $G\#_{(l_1,\dots l_m)}H$. Player $p_i$'s set of pure strategies in $G\#_{l_1,\dots l_m}H$ is then the union of their pure strategies in $G$, excluding $l_i$, and their pure strategies in $H$. For any set, $O$, of pure strategies chosen by players in an instance of the blown-up game, consider the set $O_G$ as the adjusted strategies by taking each player $p_j$'s strategy in $O\cap H$ and replacing it with $l_j$ accordingly. Then those that chose winning strategies in $O_G$ as a collection of strategies for the game $G$, excluding those choosing strategies $l_1\dots l_m$, won this game. Given the subset $L$ of $\{l_1\dots l_m\}$ of strategies which won in $O_G$, define $O_H$ as the corresponding chosen strategies of $L$ as strategies in $H$. Then, those who chose the winning strategies in $O_H$ as a collection of strategies in $H$ are also winners of this game. By convention, the losers each recieve a payoff of $-1$ and each of the $m'$ winners recieve a payoff of $\frac{m-m'}{m'}$ 
\end{defi}

We can simplify this definition substantially if both games and the blow-up are symmetric. The set of pure strategies of this new game is the set of pure strategies in $(G\setminus l) \cup H$. Given an $m$-multi-set of pure strategies, $O=\{o_i\}_{i=1}^m$ as the chosen strategies, consider the associated multi-set in $G$, $O_{G}$, taken by replacing every strategy in $O\cap H$ with $l$, and the associated multi-set of strategies of $H$, $O_H=O\cap H$. If in $O_{G}$ as a collection of strategies for $G$, $l$ loses or is not played, then the winners are the winners in $O_{G}$. Otherwise, the winners are the winners in $O_G$ and the winners in $O_H$ as a collection of strategies for $H$.

It would be interesting to find when the Nash equilibria of the component games extend to a Nash equilibrium of the composite game. For instance, in the game of musical chairs, we can blow up each chair \textit{at the loser}. In this case, instead of both people who played the same chair losing, if they choose the blown-up chair, the two players instead determine the loser through the choice of strategy in the subgame. This is opposite to the blow ups above which can be considered as blow ups \textit{at the winner}. If we were to blow up musical chairs at each chair, it would be reasonable to expect that, depending on the choices of sub-games on each chair, the sub-games may have different, strange Nash equilibria, whose projection to the original game of musical chairs may not be a Nash equilibrium at all.

\section{An Imbalanced $(m,3)$-RPS Construction}
Here, we will construct a strongly playable imbalanced $(m,3)$-$RPS$, and in the following section, we will use it and the blow-up construction to provide an imbalanced $(m,2n+1)$-$RPS$. We will show that the given $(m,2n+1)$-$RPS$ is strongly playable for at least $50$ players, and conjecture that this is also the case for any number of players. We motivate that this $(m,3)$-$RPS$ game is nearly maximally imbalanced by showing that the uniform expected payoffs converge to one that majorizes those of all other such games in the limit as the number of players goes to infinity. We similarly motivate that this game majorizes all possible symmetric Nash probabilities in the limit.

The construction of the imbalanced $(m,3)$-$RPS$ starts similarly to that of the two-player case. Let $R$ be the object that we will construct so as to win most multi-sets. As in the two-player case, there needs to be an object that beats $R$ in many situations. For this purpose, let $P$ be an object such that in all multisets containing only $R$ and $P$, $P$ always wins. To maximize $R$'s imbalance in all other multi-sets containing $R$, $R$ should always win. This structure implies that if a player chooses $R$, and even a single opponent chooses the remaining object, $S$, then object $R$ must win. Since the remaining multi-sets contain only $S$ and $P$, for $S$ to have a reason to be played, $S$ should always win in these multi-sets. More precisely, we have defined this game as:

\begin{defi}[Imbalanced  $(m,3)$-$RPS$]
    The imbalanced $(m,3)$-$RPS$ is defined by valuing any multi-set with at least one $S$ and one $R$ as a win for $R$; any multi-set with only $R$ and $P$ as a win for $P$; and any multi-set with only $P$ and $S$ as a win for $S$.
\end{defi}

We can then show the following lemmas:

\begin{lem}[Strong Playability]
    The imbalanced $(m,3)$-$RPS$ is strongly playable.
\end{lem}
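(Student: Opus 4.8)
The plan is to argue by contradiction: assuming that in some Nash equilibrium $N$ one of the three objects is played with probability $0$ by \emph{every} player, I will exhibit a player with a strictly profitable deviation, contradicting that $N$ is an equilibrium. Throughout I assume $m\ge 2$ (for $m=1$ every pure choice is an all-way tie and the statement fails, but that case is outside the intended range).

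First I would record the pairwise structure implied by the definition. The only outcomes that matter for the argument are: whenever both $R$ and $S$ appear, $R$ wins; whenever the played objects lie in $\{R,P\}$ (no $S$), $P$ wins; and whenever they lie in $\{P,S\}$ (no $R$), $S$ wins, while a one-object multiset is an all-way tie with payoff $0$. Thus the objects carry a cyclic dominance relation $R\succ S$, $S\succ P$, $P\succ R$, meaning that in any multiset drawn from a single pair the dominant object is the unique winner. This cycle is what drives the proof, even though the game itself is not cyclically symmetric, since $R$ also wins whenever $S$ is present regardless of $P$.

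Second, I would treat the three "absent object" cases, each structurally the same under the cycle. Suppose $S$ is absent, so every player's support lies in $\{R,P\}$. For a fixed player $i$, playing $P$ always makes $P$ the winner and yields expected payoff $\ge 0$, with equality only if every opponent also plays $P$ almost surely; playing $R$ yields $-1$ whenever some opponent plays $P$ and $0$ otherwise, hence expected payoff $\le 0$, with equality only if every opponent plays $R$ almost surely. If $i$ placed positive probability on $R$, the Nash condition would force the payoffs of $R$ and $P$ to be equal, and the two inequality chains would make both equal $0$; but $\mathrm{payoff}(R)=0$ forces all opponents onto $R$ almost surely while $\mathrm{payoff}(P)=0$ forces all opponents onto $P$ almost surely, impossible for $m\ge 2$. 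Hence no player uses $R$, so all play $P$; a deviation to $S$ then makes $S$ the sole winner against $m-1$ copies of $P$, giving payoff $m-1>0$, a strict improvement. The cases where $R$ or $P$ is absent are handled identically under the cycle: confinement to $\{P,S\}$ forces all players onto $S$, after which a deviation to $R$ against $m-1$ copies of $S$ wins outright; confinement to $\{R,S\}$ forces all players onto $R$, after which a deviation to $P$ against $m-1$ copies of $R$ wins outright. In every case some object's absence is contradicted, so each object is played with positive probability by at least one player in every equilibrium, which is exactly $1$-strong playability.

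Finally, the main obstacle is that strong playability quantifies over \emph{all} Nash equilibria, including asymmetric and mixed ones, so I cannot merely invoke weak dominance (a weakly dominated strategy may still survive in some equilibrium). The crux is therefore the sharp best-response computation in each case: the dominated object's payoff is $\le 0$ and the dominant object's is $\ge 0$, and their respective equality conditions (all opponents pure on one object versus all opponents pure on the other) are mutually exclusive once $m\ge 2$. This is precisely what expels the dominated object from every equilibrium and collapses the profile onto a single pure object, at which point the cyclic "third object" deviation closes the contradiction. The only routine bookkeeping left is checking the tie and all-way-tie payoffs against the convention $\frac{m-m'}{m'}$.
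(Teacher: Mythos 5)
Your proof is correct and follows essentially the same route as the paper's: in each of the three ``absent object'' cases you show the pairwise-dominated object cannot survive in any equilibrium support, collapse the profile onto a single pure object, and then exhibit the strictly profitable deviation to the missing object. Your best-response computation with the mutually exclusive equality conditions is just a more explicit justification of the strict-dominance step the paper asserts directly.
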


\begin{proof}
In any Nash Equilibrium of this game, $N$, if no player has any probability of playing $S$, then all players must play $P$ with probability $1$, as this strictly dominates $R$. However, in the case where all players play $P$ with probability $1$, playing $S$ is a strict advantage, and so $N$ is an invalid Nash equilibrium. Similarly, suppose, in a Nash equilibrium, $N$, no player has any probability of playing $R$. In that case, all other players must play $S$ with probability $1$, which makes playing $R$ a strict advantage, invalidating $N$ as a Nash equilibrium. Finally, if no players play $P$ in $N$, all players would play $R$ with probability $1$, again making playing $P$ a strict advantage, invalidating $N$ as a Nash equilibrium. 
\end{proof}

While the above proof holds, this is not necessarily the most imbalanced strongly playable $(m,3)$-$RPS$. Consider, for instance, an $(m,3)$-$RPS$ with similar rules and very large $m$. The only change in the rules is that in the multi-set with a single $P$ and $m-1$ $R$ objects chosen, $R$ wins. It is not clear that in this case, this game is no longer strongly playable; however, it is more $UI_v$ imbalanced. Therefore, there may be incremental improvements to the maximal imbalance in strongly playable $(m,n)$-$RPS$ games. Instead of showing that this is the least balanced, we show the following lemma:

\begin{lem}[The given $(m,3)$-$RPS$ is Schur-uniformly maximally imbalanced in the limit]
    The given $(m,3)$-$RPS$ has a uniform expected payoff that converges to one that majorizes all other uniform expected payoffs in the limit as $m$ goes to infinity. 
\end{lem}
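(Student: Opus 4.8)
The plan is to reduce the majorization claim to two universal constraints satisfied by \emph{every} $(m,3)$-$RPS$ together with a single asymptotic computation for the game at hand. Write $E_R,E_P,E_S$ for the three uniform expected payoffs. First I would record the structural facts. (i) In any single step a player receives either the losing payoff $-1$ or a winning payoff $\frac{m-m'}{m'}\ge 0$, so averaging over a uniform opponent profile gives $E_o\ge -1$ for each object $o$. (ii) Because an $(m,3)$-$RPS$ is totally symmetric and sum-zero, letting all $m$ players use the uniform strategy gives each an expected payoff of $0$; conditioning on the focal player's choice yields $\tfrac13(E_R+E_P+E_S)=0$, hence $E_R+E_P+E_S=0$ exactly, for every $m$. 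Thus the sorted uniform payoff vector $(b_1,b_2,b_3)$ of \emph{any} $(m,3)$-$RPS$ satisfies $b_1\ge b_2\ge b_3\ge -1$ and $b_1+b_2+b_3=0$.

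These two constraints already pin down the majorization-maximal element. From $b_2\ge b_3\ge -1$ and $b_1=-(b_2+b_3)$ we get $b_1\le 2$, and from $b_3\ge -1$ we get $b_1+b_2=-b_3\le 1$. Hence $(2,-1,-1)$ majorizes every admissible sorted vector $(b_1,b_2,b_3)$, and, since this feasible region is closed, it also majorizes every limit point of such vectors. It therefore suffices to show that the given game's uniform expected payoffs converge to $(2,-1,-1)$; note this argument bounds \emph{all} $(m,3)$-$RPS$ games, not merely the strongly playable ones.

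The one genuine computation is $E_R\to 2$. With the focal player playing $R$ and the other $m-1$ playing uniformly, the number of $R$-players is $m_R=1+X$ with $X\sim\mathrm{Bin}(m-1,\tfrac13)$; the focal player loses (payoff $-1$) precisely when some opponent plays $P$ and none plays $S$, and otherwise collects $\frac{m-m_R}{m_R}$. I would write $E_R=E\!\left[\frac{m-m_R}{m_R}\right]-E\!\left[\left(\frac{m-m_R}{m_R}+1\right)\mathbf 1[R\text{ loses}]\right]$ and evaluate the first term through the identity $E[1/(1+X)]=\frac{1-(2/3)^m}{m/3}$, which gives $E\!\left[\frac{m-m_R}{m_R}\right]=m\,E[1/m_R]-1=2-3(2/3)^m$. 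The correction term is at most $m\,(2/3)^{m-1}$, since $R$ loses only when no opponent plays $S$ (probability $\le(2/3)^{m-1}$) and $\frac{m-m_R}{m_R}+1=\frac{m}{m_R}\le m$; this vanishes, so $E_R\to 2$. Finally $E_R+E_P+E_S=0$ with $E_R\to 2$ and $E_P,E_S\ge -1$ forces $E_P\to -1$ and $E_S\to -1$, identifying the limit as $(2,-1,-1)$.

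The main obstacle is controlling the small-$m_R$ tail in the $E_R$ computation: the per-outcome payoff $\frac{m-m_R}{m_R}$ is unbounded as $m_R\to 1$, so both the estimate $E[1/(1+X)]\sim 3/m$ and the vanishing of the loss-correction rely on the exponential smallness of the events $\{m_R\text{ small}\}$ and $\{R\text{ loses}\}$. Packaging these through the exact binomial identity above is the cleanest route, since it sidesteps an explicit concentration argument; everything else in the proof is forced by the sum-zero identity and the universal lower bound $E_o\ge -1$.
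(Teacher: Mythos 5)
Your proof is correct, and it takes a genuinely different route from the paper's. The paper constructs an idealized (unplayable) comparison game in which an object $R'$ wins every multi-set containing it and $P'$ loses every mixed multi-set, asserts that this game's uniform payoff vector majorizes all others essentially by construction, and then bounds the difference $F(R')-F(R)$ by $m\bigl(\tfrac{2^{m-1}-1}{3^{m-1}}\bigr)$ — the probability mass of the multi-sets on which the two games disagree times the maximal payoff gap — concluding exponential convergence. You instead never introduce the comparison game: you extract the two universal constraints $E_o\ge -1$ and $E_R+E_P+E_S=0$ (the latter from total symmetry plus the sum-zero property under the all-uniform profile), observe that these alone force every admissible sorted vector to be majorized by $(2,-1,-1)$, and then verify by a direct binomial computation that the given game's payoffs converge to exactly that point. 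What your approach buys is a fully explicit justification of the majorization-dominance step, which the paper leaves implicit (it never actually verifies that the ideal game's payoff vector majorizes every other game's, nor computes the limit vector), and it cleanly handles the unbounded per-outcome payoff $\tfrac{m-m_R}{m_R}$ via the exact identity $E[1/(1+X)]=\tfrac{1-(2/3)^m}{m/3}$. What the paper's approach buys is an explicit comparison object and a slightly more direct quantitative rate statement ($\approx(2/3)^m$) for each coordinate separately, whereas you recover $E_P,E_S\to -1$ only indirectly from the sum-zero constraint; your route yields the same rate if one tracks the error terms, but you do not state it. Both arguments establish the lemma as written.
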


\begin{proof}
Consider the maximally imbalanced $(m,3)$-$RPS$. This game would contain an object, $R'$, such that in any multi-set containing $R'$, the players choosing $R'$ would win. Similarly, it would contain an object, $P'$, such that in any multi-set containing $P'$ and at least one other object, $P'$ would lose. The final object $S'$ is then defined to lose whenever it is chosen in a multi-set containing $R'$ and win whenever it is in a multi-set containing only $P'$ and $S'$. 

This game is trivially unplayable; however, we can compare its imbalance with that of the game constructed above. First, we can note that in these games, $S'$ acts identically to $S$, so its uniform expected payoff is identical. We can see that the number of multi-sets that $R'$ wins that $R$ would not be the $m-1$ multi-sets containing both $R'$ and $P'$, and these are the same multi-sets that $P'$ would lose that $P$ wins instead.

We can then create an upper bound of the difference of the uniform expected payoffs, $F$, of $R$ and $R'$ as:

\begin{equation}
    F(R')-F(R)< m\left(\frac{2^{m-1}-1}{3^{m-1}}\right)
\end{equation}
Here $\frac{2^{m-1}-1}{3^{m-1}}$ is the probability of losing in $R$ where you would have won in $R'$, and it is multiplied by the largest the difference in payoff from winning and losing can be, $m=m-1-(-1)$.

Similarly, by the fact that these uniform expected payoffs sum to $0$, we have that:
\begin{equation}
    F(P')-F(P)>-m\left(\frac{2^{m-1}-1}{3^{m-1}}\right)
\end{equation}
As $m$ gets large, the difference in these uniform expected payoffs goes to zero at a rate of approximately $(\frac{2}{3})^m$. Therefore, these uniform expected payoffs converge at an exponential rate to those of the maximally imbalanced game that majorizes all other such payoffs.
\end{proof}

We can also show that:

\begin{lem}[The given $(m,3)$-$RPS$ is symmetric Schur-distributionally maximally imbalanced in the limit]
    For large $m$, the given $(m,3)$-$RPS$ has symmetric Nash probabilities that converge to a distribution that majorizes all others as $m$ goes to infinity.
\end{lem}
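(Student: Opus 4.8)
The plan is to characterize the symmetric Nash equilibrium of the given game by two polynomial equations in the common strategy, show that every such equilibrium converges to the vertex distribution $(0,1,0)$ that places all mass on $P$, and then observe that this is the maximum of the majorization order. First I would compute the expected payoff of a focal player when the other $m-1$ players all use a common mixed strategy $(r,p,s)$ (probabilities of $R,P,S$, with $r+p+s=1$). Writing each realized payoff as $-1 + \mathbf{1}[\text{win}]\cdot\frac{m}{m'}$, where $m'$ is the number of winners, and reading off the rules, one checks that a focal $P$-player wins exactly when no other player chooses $S$, and a focal $S$-player wins exactly when no other player chooses $R$. Evaluating the resulting binomial sums with the identity $\frac{1}{k+1}\binom{m-1}{k}=\frac{1}{m}\binom{m}{k+1}$ telescopes them into closed form, giving
\begin{equation}
E_P = -1 + \frac{(r+p)^m - r^m}{p}, \qquad E_S = -1 + \frac{(p+s)^m - p^m}{s}.
\end{equation}
The remaining payoff $E_R$ is then forced by the sum-zero identity $rE_R + pE_P + sE_S = 0$. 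A symmetric equilibrium exists by Nash's theorem for symmetric games, and by the Strong Playability lemma every equilibrium has full support, so $E_P = E_S = 0$, yielding the characterizing equations
\begin{equation}
(r+p)^m - r^m = p, \qquad (p+s)^m - p^m = s.
\end{equation}

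Next I would extract the limit by applying the mean value theorem to $x\mapsto x^m$. From $(r+p)^m - r^m = p$ with $p>0$ there is $\eta\in(r,r+p)$ with $m\eta^{m-1}=1$, so $\eta = m^{-1/(m-1)}\to 1$; since $\eta < r+p = 1-s$ this forces $s < 1 - m^{-1/(m-1)}\to 0$. Symmetrically, $(p+s)^m - p^m = s$ with $s>0$ yields $\xi = m^{-1/(m-1)}\in(p,p+s)$, and $\xi < p+s = 1-r$ gives $r < 1 - m^{-1/(m-1)}\to 0$. Hence $r,s\to 0$ and $p = 1-r-s\to 1$, so the symmetric Nash probabilities converge to $(0,1,0)$.

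Finally, I would note that the length-$3m$ sequence of player-object probabilities of a symmetric equilibrium consists of $m$ identical copies of $(r,p,s)$, and that majorization between such repeated sequences is equivalent to majorization of the base vectors, since both have equal sum. The limit $(0,1,0)$, sorted, is $(1,0,0)$, which majorizes every probability vector of length three and is therefore the maximum of the majorization order; in particular it majorizes the symmetric-equilibrium distribution of every other $(m,3)$-$RPS$ game, as claimed.

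I expect the main obstacle to be the payoff computation itself: correctly determining, case by case, who wins and how many winners there are under a focal player's deviation, and carrying the binomial summation through to the closed forms above. Once the two characterizing equations are established, the mean-value-theorem limit and the majorization reduction are both short, and the identification of $(1,0,0)$ as the majorization maximum is immediate.
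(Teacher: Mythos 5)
Your proof is correct and follows essentially the same route as the paper: both invoke Nash's theorem together with strong playability to force zero expected payoff on every pure strategy in a symmetric equilibrium, and both arrive at the same characterizing equations $(r+p)^m = p + r^m$ and $(p+s)^m = s + p^m$ before passing to the limit and appealing to the fact that a degenerate distribution majorizes all others. The one substantive difference is in the limit extraction: where the paper simply asserts that ``solving in the limit'' forces one coordinate to $1$, your mean-value-theorem step ($m\eta^{m-1}=1$ with $\eta = m^{-1/(m-1)}\to 1$ bounding $s$ and $r$ from above) supplies an actual argument and additionally identifies \emph{which} coordinate tends to $1$ (namely $p$, consistent with the paper's numerical table), so your version is a modest sharpening of the paper's.
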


\begin{proof}
By Nash's result, there must be at least one symmetric Nash equilibrium \cite{nashnperson}. As it is symmetric and strongly playable, each strategy must have exactly zero expected value for all players. To find this Nash equilibrium, let the variable $r,p,s$ describe the probabilities of player $i$ playing strategies $R, P, S$ respectively. We can then solve the corresponding tensor matrix problem to find values of $r, p, s$ \cite{computnashtensornplayer}. We can then see that the expected value for player $i$ playing $P$ is:

\begin{equation}
\begin{split}
    0& =\sum_{i=1}^{m-1} \frac{m-i}{i}p^{i-1}r^{m-i} \binom{m-1}{i-1}-(1-(1-s)^{m-1}) \\
    &=\sum_{i=1}^{m-1} p^{i-1}r^{m-i} \binom{m-1}{i}-(1-(p+r)^{m-1})\\
    &=\frac{r}{p}\sum_{i=1}^{m-1} p^{i}r^{m-i-1} \binom{m-1}{i}-(1-(p+r)^{m-1})\\
    &=\frac{r}{p}((p+r)^{m-1}-r^{m-1})-(1-(p+r)^{m-1})\\
    &=(p+r)^{m-1}\left(\frac{r}{p}+1\right)-\left(1+\frac{r^{m}}{p}\right)\\
    \end{split}
\end{equation}
By multiplying both sides by $P$ then adding $P+R^m$ to both sides, we get:

\begin{equation}(P+R)^m=P+R^m\end{equation}
Using identical arithmetic for the expected value of player $i$ playing $S$, we get:

\begin{equation}(S+P)^m=S+P^m\end{equation}
Finally, we have the equation: 
\begin{equation}S+P+R=1\end{equation}
Solving for $P$,$R$, and $S$ in the limit as $m$ goes to infinity, gets us that exactly one of $P$, $R$,  or $S$ approaches $1$ and the rest approach $0$. Any of these outcomes corresponds to a distribution that majorizes all others in the limit, achieving the most Schur-distributional imbalance possible. 
\end{proof}

Numerically computed values for different choices of $m$ are given in the table below. Note that with 20 players, this results in over $15$ players tied for the win on an average instance of this game.

\renewcommand\tabularxcolumn[1]{>{\RaggedRight\arraybackslash}p{#1}}
\begin{table}[h]
\centering
\caption[Symmetric Nash equilibria for imbalanced game]{Some Example probabilities in Symmetric Nash equilibria for imbalanced $(m,3)$-$RPS$ in different choices of $m$}
\label{example (m,3)-RPS}
\begin{tabularx}{.9\linewidth}{lccc}
\toprule
&\multicolumn{1}{X}{\hspace{16mm}P(R)}
&\multicolumn{1}{X}{\hspace{16mm}P(P)} 
&\multicolumn{1}{X}{\hspace{16mm}P(S)}\\
\midrule
$m=3$ & $.324$ & $.473$ & $.202$\\

$m=5$ & $.288$ & $.622$ & $.090$\\

$m=10$ & $.212 $& $.760$ & $.027$\\

$m=15$ & $.169$ & $.817$& $.013$\\

$m=20$ & $.142$ & $.850$ & $ .008$ \\
\bottomrule
\end{tabularx}
\end{table}

\section{An Important Theorem for the $(m,2k+1)$-$RPS$}

The following theorem will be necessary for our proof of the strong playability of $(m,2n+1)$-$RPS$ for $m<50$.

\begin{thm}\label{hardprob}
    In every Nash equilibrium of $(m,3)$-$RPS$, with $m<50$, at least two players must choose a mixed strategy, with positive probability of playing object $S$.
\end{thm}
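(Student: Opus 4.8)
The plan is to argue by contradiction, leaning on the Strong Playability lemma: every Nash equilibrium already places positive weight on $S$ for at least one player, so it suffices to rule out equilibria in which \emph{exactly} one player does. Suppose $N$ is such an equilibrium and, after relabeling, let player $1$ be the unique player with $s_1 := \Pr[1\text{ plays }S] > 0$. Then players $2,\dots,m$ randomize only over $R$ and $P$; write $p_j$ for the probability player $j$ plays $P$ and set $q := \prod_{j=2}^m p_j$. First I would pin down the coarse structure of $N$. Since no opponent of player $1$ ever plays $S$, playing $P$ against them never loses (it wins or forces an all-$P$ tie), so $E_1[P]\ge 0$; meanwhile $E_1[R]=\prod_{j=2}^m(1-p_j)-1<0$ unless all $p_j=0$ (a case that is immediately inconsistent with $s_1>0$). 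Hence $R$ is dominated for player $1$, who therefore mixes only over $S$ and $P$, with $E_1[S]=E_1[P]$ when $p_1>0$ (and the inequality $E_1[S]\ge E_1[P]$ otherwise, handled identically).

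A direct count then yields the three payoffs I will need. Player $1$, the lone $S$, wins only when all opponents play $P$, so $E_1[S]=mq-1$. A deviator $j\ge 2$ to $S$ wins exactly when the other $m-2$ non-player-$1$ opponents all play $P$, tying with player $1$ precisely when $1$ also plays $S$, which after averaging the sole-win payoff $m-1$ and the tied-win payoff $\tfrac{m-2}{2}$ gives $E_j[S]=\tfrac{m}{2}(1+p_1)\,\tfrac{q}{p_j}-1$. Finally, writing player $1$'s $P$-payoff as $E[\tfrac{m}{t}-1]$ in the number $t$ of $P$-players and applying Jensen to the convex map $t\mapsto \tfrac{m}{t}-1$ gives $E_1[P]\ge \tfrac{m}{1+\sum_{j\ge2}p_j}-1$. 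Combining this with $E_1[S]=E_1[P]$ forces $q\bigl(1+\sum_{j\ge2}p_j\bigr)\ge 1$; since $\sum_{j\ge2}p_j\le m-1$ this already yields $q\ge 1/m$, and together with AM--GM it pushes the $p_j$ toward $1$ and $q$ toward $1/m$.

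The heart of the argument is the family of no-deviation constraints $E_j[S]\le E_j^{\mathrm{eq}}$ for $j\ge 2$, coupled with the zero-sum identity $E_1^{\mathrm{eq}}+\sum_{j\ge2}E_j^{\mathrm{eq}}=0$, i.e. $\sum_{j\ge2}E_j^{\mathrm{eq}}=1-mq\le 0$. Substituting the formula for $E_j[S]$, using $\sum_{j\ge2}q/p_j\ge (m-1)\,q^{(m-2)/(m-1)}$ from AM--GM, and eliminating $p_1$ through player $1$'s indifference reduces the whole system to a single scalar inequality in $q$, which one compares against the lower bound $q\ge 1/m$. I expect this clean reduction, by itself, to be too lossy to reach the stated threshold, because at leading order the two forces nearly cancel: with $q\approx 1/m$ and $p_j\approx 1$ one gets $E_j[S]\approx \tfrac12(p_1-1)\le 0$ while $E_j^{\mathrm{eq}}\approx 0$, so the sign of $E_j[S]-E_j^{\mathrm{eq}}$ is decided entirely by subleading terms. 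The decisive step is therefore an \emph{exact} evaluation of the equilibrium payoffs $E_j^{\mathrm{eq}}$ (which split the near-zero total among players $2,\dots,m$) and of player $1$'s indifference equation; this produces a transcendental condition in $m$ whose crossing point is $m=50$, below which the deviation to $S$ is strictly profitable and the assumed equilibrium cannot exist.

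The main obstacle is exactly this near-cancellation at leading order, which rules out soft inequality bounds and demands the sharp closed forms for the binomial payoff sums. A secondary but genuine obstacle is that $N$ need not be symmetric among players $2,\dots,m$; I would handle this by applying the no-deviation constraint to the player with the smallest $p_j$ (equivalently the largest $q/p_j$, hence the strongest incentive to switch to $S$) and using $\min_j p_j\le q^{1/(m-1)}$ to lower-bound that incentive, so that verifying the threshold for this worst-case deviator settles the asymmetric case as well.
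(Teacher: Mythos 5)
Your setup matches the paper's opening moves: the contradiction with a unique $S$-player, the observation that this player never profitably plays $R$ and must be indifferent between $S$ and $P$, and the payoff formulas $E_1[S]=mq-1$ and $E_j[S]=\tfrac{m}{2}(1+p_1)\tfrac{q}{p_j}-1$ are all correct. But there is a genuine gap at exactly the point you flag as ``the decisive step,'' and the mechanism you propose for closing it is wrong. There is no transcendental condition in $m$ whose crossing point is $50$: the threshold in the theorem is not a mathematical phase transition but simply the range over which the authors ran a computer-algebra verification, and the paper explicitly conjectures the statement for \emph{all} $m$, so a picture in which the deviation to $S$ is profitable precisely for $m<50$ cannot be correct. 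As you yourself observe, your Jensen/AM--GM reduction cancels at leading order ($E_j[S]\approx\tfrac{1}{2}(p_1-1)$ versus $E_j^{\mathrm{eq}}\approx 0$), so the soft-inequality route cannot decide the sign, and no exact computation is supplied to replace it.

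The paper closes this gap by a different route. It first invokes the indifference conditions $E_i(R)=E_i(P)$ for every player who mixes between $R$ and $P$ --- conditions your proposal never uses --- and proves a structural lemma: expressing the relevant probabilities as symmetric multilinear polynomials $X^b$ in the $r_i$, the difference of two players' indifference equations factors as $(r_j-r_i)$ times a multilinear expression that is shown to be strictly negative at every corner of the cube, hence everywhere, forcing all $R$-mixing players to use a common probability $r$. This collapses the system to polynomial equations and inequalities in two variables $(r,s)$ with integer parameters $(k,t)$ counting mixed players and pure-$P$ players, and infeasibility is then certified by a Mathematica script for all $k,t\le 50$. To complete your version you would need either that equal-probability lemma or a substitute that genuinely controls the asymmetric $p_j$ and the pure-$P$ players (your worst-case-deviator device does not, since $E_j^{\mathrm{eq}}$ still depends on all of the $p_j$), followed by an exact case-by-case verification rather than a single closed-form inequality in $m$.
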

In fact, a slightly weaker form of this is necessary to prove strong playability. Considering a Nash equilibrium, $N$, for the above $(m,3)$-$RPS$ such that only one player, $q$, plays $S$. Moreover, all other players receive at least $\epsilon>0$ less expected payoff by playing $S$ than their current strategy. Then in $(m,2k+1)$-$RPS$, where players other than $q$ play their same strategy in the $(m,3)$-$RPS$. Player $q$ plays all objects in the top-most $RPS$ identically, but distributes his probability in the bottom $2k-1$ objects with relative proportions according to the $(2,2k-1)$-$RPS$ strategy, e.g., the ratio between objects in level $i$ to $j$ is $\frac{1}{3^{i-j}}$. However, instead of playing the bottom-most $S$ object, he allocates its probability to the bottom-most $P$-type object.

If another player were to play in the bottom $(2,2k-1)$-$RPS$, they would maximize their expected payoff when $q$ also plays there by playing $S$ in the Nash equilibrium $N$. By doing so, they can receive an expected payoff that is higher than an even split when $q$ also plays here by a factor of $\frac{1}{3^k}$. Thus, for a large enough $k$, this added expected payoff is less than $\epsilon$. Therefore, playing here would still reduce their expected value. Thus, under this assumption, there is a Nash equilibrium where the bottom-most $S$ is not played and the game is not strongly playable. Broadly, we have shown that if a slightly weaker form of Lemma \ref{hardprob} were not the case, then an $(m,2n+1)$-$RPS$ game cannot be strongly playable.

The reason we only prove this for up to $50$ players is that the method used here demonstrates the theorem when there is no solution to a given system of equations and inequalities. We can provide these inequalities for an arbitrary number of players; however, even after simplification, the resulting equations are complicated to manipulate. After fixing the number of players, the equations become polynomials, and computer algebra systems can prove that these equations and inequalities have no solutions. We conjecture that, in general, there is no solution for any number of players, $m\in \N$.

\begin{proof}
 By contradiction, assume only one player $q_s$ plays object $S$. $q_s$ would never play $R$, as it would always result in a loss. If they were to only play $S$, all other players would play $R$, and $q_s$'s expected value of playing $P$ would be $m-1$. Therefore, $q_s$ must play some mixed strategy with probability $P(q_s=S)$ of playing $S$, and $P(q_s=P)=1-P(q_s=S)$ of playing $P$. Therefore, in all Nash equilibria, $q_s$ must have an equal expected payoff for $P$ and $S$. 
 
 By assumption, any other player plays only objects $R$ and $P$ with some probability. If any player chooses only to play the pure strategy $R$, then all other players would only choose the pure strategy $P$, and it would not be a Nash equilibrium by the same argument as above for player $q_s$. This same argument does not work for a player playing solely the pure strategy $P$. Let $t\geq 0$ players play only $P$, and by adjusting variables $k$, other players play in this game. Then after adjusting variables, the the payoff of tying $k'$ ways with $P$ is now $\frac{k-k'}{k'+t}$, and the payoff of tying $k'$ ways with $R$, and $S$ is $\frac{k+t-k'}{k'}$. Each player playing only $P$ has the same expected value, so when solving for their expected value without loss of generality, we will only consider it for some player $q_p$. The remaining players will be labeled $q_{r_1} \dots q_{r_{k-1}}$. 
 To simplify the expected value equations, let the probabilities of player $q_{r_i}$ playing $R$ be $r_i$, and player $q_s$ playing $S$ be $s$. Let the expected payoff of player $q_{r_i}$ playing $P,R,S$, respectively be $E_i(P),E_i(R),E_i(S)$.  Similarly, let the expected payoffs for $q_s$ and $q_p$ be $E_s(P),E_s(R),E_s(S)$, and $E_p(P),E_p(R),E_p(S)$. Before we begin, we will define $P_i(X)$ as the probability of event $X$ happening for player $q_{r_i}$. For the sake of simplicity, in the equations below, $r$-players refers to players $q_{r_1}\dots q_{r_{k-1}}$. We will then show the following lemma that will make the resulting calculation much simpler:
 \begin{lem}
     If there were to be such a Nash equilibrium, then in that equilibrium, all players who play object $R$ play it with the same probability.
 \end{lem}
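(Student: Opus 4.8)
The plan is to read the lemma off the Nash indifference conditions of the $r$-players. Each $q_{r_i}$ plays both $R$ and $P$ with positive probability, so in any equilibrium it must be indifferent between them: $E_i(R)=E_i(P)$. First I would expand both payoffs by conditioning on the action of the unique $S$-player $q_s$, writing $s=P(q_s=S)\in(0,1)$ and letting $K_i''$ denote the random number of $r$-players \emph{other than} $i$ that play $R$. When $q_s$ plays $S$, choosing $R$ makes $q_{r_i}$ a winner alongside every other $R$-player while choosing $P$ always loses; when $q_s$ plays $P$, choosing $R$ always loses (only $R$ and $P$ are in play, so $P$ wins) while choosing $P$ wins with the whole $P$-block. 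With $m$ the total number of players this gives
\[
E_i(R)=-(1-s)+s\,\mathbf{E}\!\left[\frac{m-(K_i''+1)}{K_i''+1}\right],\qquad
E_i(P)=-s+(1-s)\,\mathbf{E}\!\left[\frac{K_i''}{m-K_i''}\right].
\]

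The useful observation is that adding $1$ to each payoff collapses it, since $\frac{m-(K_i''+1)}{K_i''+1}+1=\frac{m}{K_i''+1}$ and $\frac{K_i''}{m-K_i''}+1=\frac{m}{m-K_i''}$. Hence $E_i(R)=E_i(P)$ is equivalent, after adding $1$ to both sides and dividing by $m$, to the single clean equation
\[
s\,\mathbf{E}\!\left[\frac{1}{K_i''+1}\right]=(1-s)\,\mathbf{E}\!\left[\frac{1}{m-K_i''}\right],
\]
which must hold simultaneously for every $r$-player $i$.

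To finish I would compare this equation for two fixed $r$-players $q_{r_1}$ and $q_{r_2}$. Let $\widetilde K$ be the number of the \emph{remaining} $r$-players that play $R$, so that $K_1''=\widetilde K+X_2$ and $K_2''=\widetilde K+X_1$ with $X_1,X_2$ independent Bernoulli variables of parameters $r_1,r_2$, independent of $\widetilde K$. Writing $a_j=\mathbf{E}\!\left[\frac{1}{\widetilde K+1+j}\right]$ and $b_j=\mathbf{E}\!\left[\frac{1}{m-\widetilde K-j}\right]$, conditioning on $\widetilde K$ makes player $1$'s equation affine in $r_2$ and player $2$'s equation affine in $r_1$ with identical coefficients, and subtracting the two factors everything as
\[
(r_1-r_2)\left[\,s\,(a_0-a_1)-(1-s)(b_0-b_1)\right]=0 .
\]
The one real obstacle is showing the bracket cannot vanish, and this is exactly where $0<s<1$ (the fact that $q_s$ genuinely mixes $S$ and $P$) enters: one has $a_0-a_1=\mathbf{E}\!\left[\frac{1}{(\widetilde K+1)(\widetilde K+2)}\right]>0$ and $b_0-b_1=-\,\mathbf{E}\!\left[\frac{1}{(m-\widetilde K)(m-\widetilde K-1)}\right]<0$, so $s(a_0-a_1)$ is strictly positive, $(1-s)(b_0-b_1)$ strictly negative, and the bracket strictly positive. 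This forces $r_1=r_2$, and since the pair was arbitrary, all players who play $R$ do so with the same probability. I expect the only fiddly parts to be bookkeeping the winning sets under the adjusted payoffs and verifying the two sign-definite expectations, both routine once the conditioning on $q_s$ and on $\widetilde K$ is in place.
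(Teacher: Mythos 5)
Your proof is correct, and it reaches the lemma by a genuinely cleaner route than the paper. Both arguments share the same skeleton---impose the indifference condition $E_i(R)=E_i(P)$ for each player mixing over $R$ and $P$, subtract the conditions for two such players, factor out $(r_1-r_2)$, and show the cofactor cannot vanish---but the execution differs substantially. The paper expands every winning probability in the elementary symmetric polynomials $X^b,X^b_i,X^b_{ij}$, reduces the indifference conditions to alternating binomial identities, and then argues that the cofactor, being multilinear in $s$ and the remaining $r_l$, attains its extrema at corner points of the cube, where it evaluates to explicitly negative quantities. Your version avoids all of that bookkeeping: since an $m'$-way tie pays $\frac{m-m'}{m'}=\frac{m}{m'}-1$ and a loss pays $-1$, adding $1$ to every payoff turns the shifted payoff into $\frac{m}{\#\text{winners}}$ on a win and $0$ on a loss, which is what collapses $E_i(R)+1$ and $E_i(P)+1$ to $s\,\mathbf{E}\!\left[\frac{m}{K_i''+1}\right]$ and $(1-s)\,\mathbf{E}\!\left[\frac{m}{m-K_i''}\right]$; conditioning on the Bernoulli variable of the other compared player then makes each indifference condition affine in the other's probability, and the cofactor becomes $s\,\mathbf{E}\!\left[\frac{1}{(\widetilde K+1)(\widetilde K+2)}\right]+(1-s)\,\mathbf{E}\!\left[\frac{1}{(m-\widetilde K)(m-\widetilde K-1)}\right]$, manifestly positive (in fact $s>0$ alone suffices, and $s>0$ is forced by the standing assumption that $q_s$ actually plays $S$). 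Two small points to make explicit in a write-up: $\widetilde K\le m-3$, so the denominator $m-\widetilde K-1\ge 2$ never vanishes; and the $t$ pure-$P$ players are absorbed correctly because the winning $P$-block always has size $m-K_i''$ regardless of $t$. What your approach buys is a short, sign-transparent proof valid uniformly in $m$, $k$, and $t$; what the paper's buys is the explicit polynomial form of all eight expected values, which it needs anyway for the subsequent computer-algebra verification of the remaining inequalities.
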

 
 To demonstrate this lemma, we must first present all given expected values. By direct calculation, we can then show the following equations:

\begin{equation} \label{Eir-1}
E_i(R)=s\sum^{k-1}_{k'=0}\frac{(k+t+1)-(k'+1)}{(k'+1)}P_i(k' \text{ other $r$-players chose } R)-(1-s)\end{equation}

\begin{equation}\label{Eip-1}
E_i(P)=(1-s)\sum^{k-1}_{k'=0}\frac{(k+t+1)-(k'+t+2)}{(k'+1)+t+1}P_i(k' \text{ other $r$-players chose } P)-s\end{equation}

\begin{equation}\label{Eis-1}
\begin{split}
    E_i(S)=
    & \ P_i(\text{All other $r$-players chose } P)\left((1-s)(k+t)+\frac{s(k+t-1)}{2}\right)\\
    &-(1-P_i(\text{All other $r$-players chose } P))
\end{split}
\end{equation}

\begin{equation}\label{Esp-1}
E_s(P)=\sum^{k}_{k'=0} \frac{(k+t+1)-(k'+t+1)}{k'+1+t}P(k' \text{ $r$-players chose } P)\end{equation}

\begin{equation}\label{Ess-1}
E_s(S)= (k+t)P(\text{all $r$-players chose } P)-(1-P(\text{all $r$-players chose } P))
\end{equation}

\begin{equation}\label{Epr-1}
E_p(R)=s\sum^{k}_{k'=0}\frac{(k+t+1)-(k'+1)}{(k'+1)}P( k' \text{ $r$-players chose } R)-(1-s)
\end{equation}

\begin{equation}\label{Epp-1}
E_p(P)=(1-s)\sum^{k}_{k'=0}\frac{(k+t+1)-(k'+t+1)}{(k'+1)+(t-1)+1}P(k' \text{ $r$-players chose } P)-s
\end{equation}

\begin{equation}\label{Eps-1}
\begin{split}
E_p(S)=&\ P(\text{All $r$-players chose } P)\left((1-s)((k+t)+s\left(\frac{k+t-1}{2} \right)\right)\\
&-(1-P(\text{All $r$-players chose } P))
\end{split}
\end{equation}
Thus, the following conjecture is equivalent to the theorem that we are trying to prove:

 \begin{conj} \label{numericallemmea}
     For any $t\in \N, k\in \N$ There is no set of variables $r_1\dots r_{k-1},s$ each contained in the interval $(0,1)$ such that:
    \begin{equation}\label{i-equations}
    E_i(P)=E_i(R)\geq E_i(S)
    \end{equation}
    \begin{equation}\label{sequations}
    E_s(S)=E_s(P)\end{equation}
    \begin{equation}\label{pequations}
    t>0 \implies E_p(P) \geq E_p(R), E_p(P) \geq E_p(S)
    \end{equation}
 \end{conj}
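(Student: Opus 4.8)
The plan is to use the Lemma stated just above (all players who play $R$ do so with a common probability) to collapse the system to two scalar unknowns, put the two equalities into closed integral form, observe that they are always solvable, and then derive the contradiction entirely from the no-deviation inequality in \eqref{i-equations}. Concretely, I would set $r_1=\dots=r_{k-1}=r$, write $p=1-r$ for the probability an $r$-player picks $P$, and put $m=k+t+1$. Using $\frac{1}{j+c}=\int_0^1 x^{\,j+c-1}\,dx$ to collapse each binomial sum in \eqref{Eir-1}--\eqref{Eps-1}, the payoffs take the clean forms $E_s(S)=m\,p^{k}-1$, $E_s(P)=m\,I_0-1$ with $I_0=\int_0^1 x^{t}(px+r)^{k}\,dx$, together with $E_i(R)=s\,m\,J-1$ and $E_i(P)=(1-s)\,m\,K-1$, where $J=\int_0^1(ry+p)^{k-1}\,dy=\frac{1-p^{k}}{k(1-p)}$ and $K=\int_0^1 x^{t+1}(px+r)^{k-1}\,dx$. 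Then \eqref{sequations} becomes the single equation $(\star)\colon I_0=p^{k}$ in $p$ alone, integration by parts converts $(\star)$ into $K=\frac{1-(t+1)p^{k}}{kp}$, and the indifference $E_i(R)=E_i(P)$ from \eqref{i-equations} forces $s=\frac{K}{J+K}$. Since $J,K>0$ this $s$ lies in $(0,1)$, and since $h(p):=I_0-p^{k}$ satisfies $h(0)>0$, $h(1)<0$ and $h'(p)<0$ on $(0,1)$, equation $(\star)$ has a unique root $p^{\ast}\in(0,1)$. Thus the equalities are always satisfiable, and the whole content of the conjecture sits in the inequality $E_i(P)\ge E_i(S)$.

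Next I would show that this inequality is in fact \emph{violated} at $(p^{\ast},s^{\ast})$. Simplifying the bracket in \eqref{Eis-1} gives $E_i(S)=p^{k-1}W-1$ with $W=(1-s)(m-1)+s\tfrac{m-2}{2}+1=m\bigl(1-\tfrac{s}{2}\bigr)$, so after substituting $s^{\ast}=\frac{K}{J+K}$ the desired $E_i(S)>E_i(P)$ reduces to $p^{k-1}(2J+K)>2JK$, i.e.
\[
\frac{k\,p^{k}}{1-(t+1)p^{k}}+\frac{k\,p^{k-1}(1-p)}{2\,(1-p^{k})}>1 .
\]
The second term is positive, while the first term is increasing in $p^{k}$ and equals exactly $1$ when $p^{k}=\tfrac1m$ (there $1-(t+1)p^{k}=\tfrac{k}{m}$). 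Hence it suffices to prove the single scalar fact $p^{\ast k}>\tfrac1m$, which is equivalent to $E_s(S)>0$ at equilibrium.

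To prove $p^{\ast k}>\tfrac1m$ I would use that $h$ is strictly decreasing, so it is equivalent to $h(p_0)>0$ at $p_0:=m^{-1/k}$; and here is the crux. Writing $g(x)=p_0x+1-p_0$, one has $g(x)-x=(1-p_0)(1-x)\ge 0$ on $[0,1]$, so $g(x)^{k}\ge x^{k}$ with strict inequality on $(0,1)$, whence
\[
h(p_0)=I_0(p_0)-p_0^{k}=\int_0^1 x^{t}\bigl(g(x)^{k}-x^{k}\bigr)\,dx>0 ,
\]
using $p_0^{k}=\tfrac1m=\int_0^1 x^{t+k}\,dx$. Therefore $p^{\ast}>p_0$, so $p^{\ast k}>\tfrac1m$, the displayed inequality holds, and $E_i(S)>E_i(P)$, contradicting \eqref{i-equations}. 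Because the Lemma and the uniqueness of $p^{\ast}$ force every candidate solution to be exactly $(p^{\ast},s^{\ast})$, this rules out all solutions for $k\ge 1$ and every $t\ge 0$. The degenerate case $k=0$ has no $r$-players, and \eqref{Ess-1} reduces \eqref{sequations} to $t=0$ (a one-player game), so no genuine solution exists there either. This would upgrade the statement to a theorem valid for all $(t,k)$, removing the $m<50$ restriction.

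The step I expect to be the main obstacle is securing the clean reduction to the scalar condition $p^{\ast k}>\tfrac1m$: the binomial-to-integral collapse, the integration by parts producing $K=\frac{1-(t+1)p^{k}}{kp}$, and the cancellations giving $s^{\ast}=\frac{K}{J+K}$ and $W=m(1-\tfrac{s}{2})$ must all be carried out carefully, and one must verify $h'<0$ both to guarantee a unique root of $(\star)$ and to fix the orientation $h(p_0)>0\Rightarrow p^{\ast}>p_0$. Once the problem is in this form, the uniform-in-$(t,k)$ conclusion follows from the single pointwise inequality $g(x)\ge x$, which is exactly what makes the argument work simultaneously for every number of players rather than only after fixing $m$ and running a computer algebra system.
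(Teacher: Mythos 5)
Your proposal is correct as far as I can verify, and it takes a genuinely different route from the paper. The paper proves the same preliminary lemma (all mixing $r$-players share a common probability) and then, finding the resulting one-variable equations too complex to handle in general, fixes each pair $(k,t)$ with $k,t\le 50$, reduces the system to polynomials, and discharges every instance with a Mathematica script --- which is why Theorem \ref{hardprob} carries the $m<50$ restriction and the general statement remains Conjecture \ref{numericallemmea}. You instead collapse the binomial sums into Beta-type integrals, getting $E_s(P)=mI_0-1$, $E_i(R)=smJ-1$, $E_i(P)=(1-s)mK-1$ with $m=k+t+1$; observe that \eqref{sequations} becomes the single equation $I_0=p^k$, whose root $p^*$ is unique because $h=I_0-p^k$ is strictly decreasing; note that $E_i(R)=E_i(P)$ pins $s^*=K/(J+K)$; and then derive the contradiction from the inequality in \eqref{i-equations} alone via the scalar fact $p^{*k}>1/m$, proved by the pointwise bound $p_0x+1-p_0\ge x$ at $p_0=m^{-1/k}$. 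I checked the pivotal steps: integration by parts does give $K=\frac{1-(t+1)p^k}{kp}$ at the root; $W=m(1-\frac{s}{2})$ is right; the reduction to $p^{k-1}(2J+K)>2JK$ and the evaluation of $\frac{ku}{1-(t+1)u}$ at $u=1/m$ (equal to $1$ because $m-t-1=k$) are correct; and a spot check at $k=2$, $t=0$, $m=3$ gives $p^*=\frac{-3+\sqrt{33}}{4}\approx 0.686$, $p^{*2}\approx 0.471>\frac13$, and $E_i(S)\approx 0.736 > E_i(P)\approx -0.206$, confirming the violation. If written out carefully, your argument upgrades the conjecture to a theorem for all $t,k$ --- removing the $50$-player cap in Theorem \ref{hardprob} and downstream --- and it never even needs the $t>0$ constraints \eqref{pequations}. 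What the paper's approach buys is only certainty for bounded instances at low proof cost; what yours buys is uniformity in $(t,k)$, coming from the single monotonicity $h'<0$ and the pointwise inequality $g(x)\ge x$.

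One caveat you should flag when writing this up: your closed forms agree with the paper's primary equations \eqref{Eir-1}--\eqref{Eps-1} (and with direct computation from the game), but they differ by additive constants from several of the paper's simplified displays. For example, \eqref{esp-3} should read $E_s(P)=\sum_{b=0}^{k}\frac{1}{\binom{k+t}{b}}\binom{k}{b}r^{b}-1$ (test $k=1$, $t=0$: the true value is $r$, not $1+r$), and the tails $-(1-s)$ and $-s$ in \eqref{eir-3}, \eqref{eip-3}, \eqref{epr-3}, \eqref{epp-3} should all be $-1$, since the $-1$ arising from $\frac{m-w}{w}=\frac{m}{w}-1$ was dropped in the paper's resummation. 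These constants cancel in the double differences used to prove the common-probability lemma, so that lemma is unaffected and your reliance on it is safe, but they do matter for the exact system fed to the computer algebra verification; your integral forms are the ones consistent with the game's definition.
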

 We can show that this is true for all $t< 50$ and $ k<50$, thus showing that the lemma is true with $50$ or fewer players.

We will start by simplifying the following:
\begin{enumerate}
    \item $P(\text{All $r$-players chose } P)$
    \item $P_i(\text{All other $r$-players chose } P)$
    \item $P(k' \text{ $r$-players chose } R)$
    \item $P_i(k' \text{ other $r$-players chose } R)$
    \item $P( k' \text{ $r$-players chose } P)$
    \item $P_i(k' \text{ other $r$-players chose } P)$
\end{enumerate}

The first two items on this list are straightforward to solve for:

\begin{equation}\label{p-allrplayp}
P(\text{All $r$-players chose } P)=\prod^{k}_{j=1} (1-r_j)\end{equation}

\begin{equation}\label{allotherrplayp}
P_i(\text{All other $r$-players chose } P)=\frac{1}{(1-r_i)}\prod^{k}_{j=1}( 1-r_j)
\end{equation}
To simplify the next two items, let $W_{k'}$ be the subsets of variables $\{r_1,\dots r_{k}\}$ of size $k'$, and ${W^i}_{k'}$ the given sets without element $r_i$.

\begin{equation}\label{k'playerschooseR}
P(k' \text{ $r$-players chose } R)=\sum_{w\in W_{k'}}\prod_{r_j\in w} r_j\prod_{r_l\in w^c}(1-r_l)
\end{equation}
\begin{equation}\label{k'otherplayerschoser}
P_i( k' \text{ other $r$-players chose } R)=\sum_{w\in {W^i_{m'}}}\prod_{r_j\in w} r_j\prod_{r_l\in w^c}(1-r_l)
\end{equation}
For the final two items, we can use items $3$, and $4$ to simplify them as follows:
\begin{equation}
P( k'\text{ $r$-players chose } P)=P(( k-k') \text{ $r$-players chose } R)
\end{equation}
\begin{equation}
P_i(k' \text{ other $r$-players chose } P)= P_i( (k-1-k') \text{ other $r$-players chose } R)
\end{equation}

We can note that all non-$r$-player-specific probabilities above, i.e., those not of the form $P_i(X)$, are symmetric multi-linear polynomials in $r_1,\dots, r_m$, which implies that in these equations the coefficient of each of the products of $k'$ different $r_i$, $r_{j_1}\dots r_{j_{k'}}$, is the same regardless of the $r_i$s chosen. We can see that the same is true for the player-specific probabilities, $P_i(X)$, if we exclude $r_i$. Let $X^{b}$ be the sum of all products of $b$ distinct choices of $r_i$, $X_{j}^{b}$ to be these, not including the variable $r_j$, and similarly $X_{ij}^{b}$ these not including variables $r_i$ or $r_j$. Then:

\begin{equation}\label{k'rplayerschoser-2}
P(k' \text{ $r$-players chose } R)= \sum^{k}_{b=k'}(-1)^{b-k'}\binom{b}{k'}X^b
\end{equation}
We can see that this is the case by considering the coefficient of some element of the sum $X^{b'}$, $r_{j_1}\dots r_{j_{b'}}$ in equation \ref{k'playerschooseR}. An element of the sum in equation \ref{k'playerschooseR} only contributes to the coefficent of $r_{j_1}\dots r_{j_{b'}}$ if $w$ is contained in $j_1\dots j_{b'}$, and contributes $1$ or $-1$ depending only on whether $b'-k'$ is even or odd. Therefore the coefficient of $X^b$ must be  $(-1)^{b-k'}\binom{b}{k'}$ as shown above. Similarly, the following equations hold:

\begin{equation}\label{k'otherrplayerschoser-2}
P_i(k' \text{ other $r$-players chose } R)= \sum^{k-1}_{b=k'}(-1)^{b-k'}\binom{b}{k'}X_i^b
\end{equation}

\begin{equation}\label{k'rplayerschosep-2}
P(k'\text{ $r$-players chose } P)= \sum^{k}_{b=k-k'}(-1)^{b-(k-k')}\binom{b}{k-k'}X^b\end{equation}

\begin{equation}\label{k'otherrplayerschosep-2}
P_i(k'\text{ other $r$-players} \text{ chose } P)= \sum^{k-1}_{b=(k-1)-k'}(-1)^{b-((k-1)-k')}\binom{b}{(k-1)-k'} X_i^b
\end{equation}
Inputting these into the above-mentioned expected value equations and solving in terms of $X^b, X_i^b$ simplifies these expected values into:

\begin{equation} \label{eir-2}
E_i(R)=s\sum_{b=0}^{k-1}\frac {(-1)^b(1+k+t)}{1+b} X^b_i-(1-s)
\end{equation}
To see how we got this, plug in the equation \ref{k'otherrplayerschoser-2} into the equation \ref{Eir-1} to get:

\begin{equation}E_i(R)=s\sum_{k'=0}^{k-1} \frac{(k+t+1)-(k'+1)}{k'+1} \sum_{b=k'}^{k-1} (-1)^{b-k'} \binom{b}{k}X^b_i-(1-s)\end{equation}
Then we exchange the order of the sums to get:

\begin{equation}E_i(R)=s\sum_{b=0}^{k-1} (-1)^b X_i^b \sum_{k'=0}^b \frac{k+t+1-(k'+1)}{k'+1}(-1)^{-k'}\binom{b}{k}-(1-s)\end{equation}
The following identity then gives us equation \ref{eir-2}. 

\begin{equation}
\sum_{k'=0}^b \frac{k+t+1-(k'+1)}{k'+1}(-1)^{-k'}\binom{b}{k}=\frac{1+k+t}{1+b}
\end{equation}
We can similarly plug in equation \ref{k'playerschooseR} into \ref{Eip-1} to get:
\begin{equation}\label{Eip-2}
E_i(P)=(1-s) \sum_{b=0}^{k-1}\frac{1}{\binom{k+t}{b}}X^b_i -s
\end{equation}
This again comes from reordering the order of summation and using that:
\begin{equation}
    \sum_{k'=k-1-b}^{k-1} \frac{k-k'-1}{k+t+2}(-1)^{b-((k-1)-k')}\binom{b}{k-1-k'}=\frac{1}{\binom{k+t}{b}}
\end{equation}

Using the fact that equation \ref{allotherrplayp} is equivalent to $\sum_{b=0}^{k-1}(-1)^b X^b_i$ and plugging it into equation \ref{Eis-1} gets
\begin{equation}\label{eis-2}
E_i(S)=-1+(2-s)\frac{(k+1+t)}{2}\sum_{b=0}^{k-1}(-1)^b X^b_i
\end{equation}
If we take the same argument as above for $E_i(P)$ and exchange $k'$ in equation \ref{Eip-2} with $k''=k'+1$, then we have the result of plugging in equation \ref{k'rplayerschosep-2} into equation \ref{Esp-1}  and changing the order of summation. This results in:
\begin{equation}\label{esp-2}
E_s(P)=\sum_{b=0}^{k}\frac{1}{\binom{k+t}{b}}{X^b}\end{equation}
Similarly, we can apply the above identities to simplify the following:
\begin{equation}\label{ess-2}
E_s(S)=-1+(k+1+t)\sum_{b=0}^{k}(-1)^b X^b
\end{equation}

\begin{equation}\label{epr-2}
E_p(R)=s\sum_{b=0}^{k}\frac{(-1)^b(1+k+t)}{1+b} {X^b}-(1-s)
\end{equation}

\begin{equation}\label{epp-2}
E_p(P)=(1-s)\sum_{b=0}^{k}\frac{1}{\binom{k+t}{b}}{X^b}-s
\end{equation}

\begin{equation}\label{eps-2}
E_p(S)=-1+(2-s)\left(\frac{k+1+t}{2}\right)\sum_{b=0}^{k}(-1)^b {X^b}
\end{equation}
 
 Using the fact that $E_i(R)=E_i(P)$, and $E_j(R)=E_j(P)$, and subtracting the corresponding equations from each other results in:
\begin{equation}\label{big-eq-for-contra}
0=\sum_{b=0}^{k-1}\left(\frac{s(-1)^b(1+k+t)}{1+b} 
-\frac{1-s}{\binom{k+t}{b}}\right) (X^b_i-X^b_j)\end{equation}
In the difference $X^b_i-X^b_j$, the symmetric polynomial ${X^b}_{ij}$ cancels, leaving only terms $(r_j-r_i)X^{b-1}_{ij}$. Simplifying in this way results in:
\begin{equation}\label{big-eq-contra-2}
0=(r_j-r_i)\sum_{b=1}^{k-1}\left(\frac{s(-1)^b(1+k+t)}{1+b} 
-\frac{1-s}{\binom{k+t}{b}}\right) X^{b-1}_{ij}
\end{equation}
Therefore either $r_i=r_j$ or 
 \begin{equation}\label{big-eq-contra-3}
 0=\sum_{b=1}^{k-1}\left(\frac{s(-1)^b(1+k+t)}{1+b} 
-\frac{1-s}{\binom{k+t}{b}}\right) X^{b-1}_{ij}
\end{equation} 

As the second equation is multi-linear, it reaches all maxima and minima on the corner points, i.e., when variables $s, r_1\dots r_{m-1}$ are either $0$ or $1$.
As these equations are symmetric with respect to each $r_i$, we only need to consider the cases, $s=0$ or  $s=1$, and for $l\leq k-2$ values of $1$ and $k-2-l$ values of $0$ in $\{r_1\dots r_{k}\}\setminus \{r_i,r_j\}$. This results in:
\begin{equation}\label{big-eq-contra-4}0=\sum_{b=1}^{l+1}\left(\frac{s(-1)^b(1+k+t)}{1+b} 
-\frac{1-s}{\binom{k+t}{b}}\right) \binom{l}{b-1}
\end{equation} 
Which, when $s=0$ is simplified to:
\begin{equation}\label{big-eq-case1}
\frac{-(k+1+t)}{(-l+k+t)(-l+k+1+t)}
\end{equation}
and when $s=1$:
\begin{equation}\label{big-eq-contra-case-2}
\frac{-1-k-t}{(1+l)(2+l)}
\end{equation}
As  $l\leq k-2$, both of these are negative, and therefore equation \ref{big-eq-contra-3} is negative over the entire domain. Thus, $r_i=r_j$ for all $i,j$, proving our lemma above. We can then let variable $r=r_i$ for any $i$ and simplify:
\begin{equation}\label{x-simp}
    X^b=\binom{k}{b}r^{b}, \text{ and } X^b_i=\binom{k-1}{b}r^{b}
\end{equation}
This allows us to simplify the above equations of expected values further:
\begin{equation}\label{eir-3}
E_i(R)=s(1+k+t)\left( \frac{1-(1-r)^k}{kr}\right)-(1-s)\end{equation}

\begin{equation}\label{eip-3}
E_i(P)=(1-s) \left(\sum_{b=0}^{k-1}\frac{1}{\binom{k+t}{b}}\binom{k-1}{b}r^{b} \right) -s\end{equation}

\begin{equation}\label{eis-3}
E_i(S)=-1+(2-s)\frac{(k+1+t)}{2}(1-r)^{k-1}
\end{equation}

\begin{equation}\label{esp-3}
E_s(P)=\sum_{b=0}^{k}\frac{1}{\binom{k+t}{b}}\binom{k}{b}r^{b}\end{equation}

\begin{equation}\label{ess-3}
E_s(S)=-1+(k+1+t)(1-r)^k
\end{equation}

\begin{equation}\label{epr-3}
E_p(R)=s(1+k+t)\left( \frac{1-(1-r)^{k+1}}{(k+1)r}\right)-(1-s)
\end{equation}

\begin{equation}\label{epp-3}
E_p(P)=(1-s)\left(\sum_{b=0}^{k}\frac{1}{\binom{k+t}{b}}\binom{k}{b}r^{b}\right)-s
\end{equation}

\begin{equation}\label{eps-3}
E_p(S)=-1+(2-s)\left(\frac{k+1+t}{2}\right)(1-r)^k
\end{equation}
Of these equations, the expected values of playing paper are particularly complex, which makes it very hard to show conjecture \ref{numericallemmea} in general. Instead, we can solve this for fewer than $l$ players by checking all combinations of $k < l$, $ t < l$. Doing so reduces the above equations to polynomials, which a computer algebra system can solve. We do so in the Mathematica script \textit{RPS-50.wls} in the ancillary files of this submission, for $k\leq 50$ and $ t\leq 50$, resulting in no solutions.
\end{proof}

\section{The Imbalanced  $(m,2k+1)$-$RPS$ is Strongly Playable}

Using blow-ups and the imbalanced $(m,3)$-$RPS$ given above we can define an $(m,2k+1)$-$RPS$ as  $(m,3)$-$RPS\#_{S}((m,3)$-$RPS)\#_{S}\dots (m,3)$-$RPS)\dots)$. Just as we showed for the $(m,3)$-$RPS$, this game reaches the maximal imbalance for any imbalance in the symmetric Schur-distributional and Schur-uniform imbalance classes in the limit as $m$ goes to infinity. 
For this reason, we call this construction the imbalanced $(m,2k+1)$-$RPS$. We provide the following conjecture that will be necessary to assume in proving that this game is strongly playable:

\begin{conj} \label{sadconjecture}
    In any Nash equilibrium of the given $(m,2k+1)$-$RPS$ game, $N$, if the probability of player $i$ playing $S$ is $P_i(S)$ and the probability of $i$ playing any $P$-type object is $P_{i}(T_p)$, then we conjecture that:
    \begin{equation}
        \frac{P_i(T_p)}{P_i(S)}\geq (m-1)
    \end{equation}
    Moreover, for any $k>1$, and player $q_i$, the expected number of other players tying with $q_i$ for the win that occur for $P_k$ is less than $m-2$.
\end{conj}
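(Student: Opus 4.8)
The plan is to combine the Nash indifference conditions with a structural reduction that localizes the comparison between the bottom object $S$ and the $P$-type objects to the deepest $(m,3)$-sub-game of the blow-up. I would first prove a \emph{localization lemma}: fix a player $q_i$ and any pure profile of the other $m-1$ players, and compare the payoff $q_i$ gets from $S$ against the payoff from the deepest paper $P_k$. Since $S$ and $P_k$ are both ``deep'' objects, $q_i$ behaves identically as an $S$-type contender at every level $j<k$ no matter which of the two it plays; hence whether the recursion is resolved above level $k$, and by whom, is independent of this choice, and in every such outcome $q_i$ loses either way. Consequently the two strategies differ only on profiles for which $q_i$ survives as a contender down to the bottom level. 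Writing $D$ for that event and $(a,b,c)$ for the numbers of \emph{other} level-$k$ contenders playing $R_k,P_k,S$, a short case analysis gives, conditioned on $D$ and $(a,b,c)$,
\[
\begin{aligned}
E_i(P_k\mid D,a,b,c)&=\mathbf{1}[c=0]\,\tfrac{m-1-b}{b+1}-\mathbf{1}[c\ge 1],\\
E_i(S\mid D,a,b,c)&=\mathbf{1}[a=0]\,\tfrac{m-1-c}{c+1}-\mathbf{1}[a\ge 1].
\end{aligned}
\]

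This is exactly the payoff structure of the base $(m,3)$-$RPS$ restricted to the conditional contender population, so the expected-value identities (\ref{eir-3})--(\ref{eps-3}) from the proof of Theorem \ref{hardprob} can be reused almost verbatim, with the conditional contender counts in place of the global ones. Summing the displayed conditional payoffs against $\Pr((a,b,c)\mid D)$ yields closed forms for $E_i(S)$ and $E_i(P_k)$ (and, adding the higher-level wins, for the aggregate $P$-type payoff) in which the only positive contribution to $E_i(S)$ comes from the event $D\cap\{a=0\}$, whereas $P_k$ collects wins on the strictly larger event $D\cap\{c=0\}$ and the $P$-types collect still more wins at every higher level.

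Next I would insert the equilibrium conditions: $E_i(S)\le v_i$ with equality whenever $P_i(S)>0$, and likewise for each $P$-type. The crux is to convert these into the lower bound $P_i(T_p)/P_i(S)\ge m-1$. Because $S$ wins only along $D\cap\{a=0\}$ and is beaten by a single $R_k$, forcing $E_i(S)=v_i\ge E_i(P_j)$ for a played $S$ requires the other players to concentrate enough mass on $R$-type objects that the $P$-type payoffs dominate, which in turn pins the $S$-mass down through the other players' own indifference relations; I would attempt the resulting inequality by the same corner-point method used for equation (\ref{big-eq-contra-3}), since the relevant difference is multilinear in the contender probabilities and so attains its extrema at $0/1$ corners, reducing to a finite family of rational expressions in $m,k,t$. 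For the second clause I argue directly: $P_k$ wins exactly when the recursion reaches level $k$ and the survivors play only $R_k$ and $P_k$ with at least one $P_k$, and its co-winners are precisely the other contenders playing $P_k$; the conditional expectation of these co-winners equals $m-2$ only if the all-$P_k$ profile carries full conditional mass. The Strong Playability lemma together with Theorem \ref{hardprob} (which forces at least two players to place positive mass on $S$) guarantees positive probability of a profile with some contender off $P_k$, so a strictly positive fraction of the conditional mass lies off the all-$P_k$ profile, giving the strict inequality for every $k>1$.

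The main obstacle is the same one that confines Theorem \ref{hardprob} to $m<50$: after localization, the $P$-type indifference equations do not collapse to a manageable closed form — the binomial sums in (\ref{eip-3}) and (\ref{epp-3}) are the culprit — so establishing the ratio bound \emph{uniformly} in $m$ amounts to controlling a family of polynomial inequalities whose degree grows with $m$, coupled across players through the fixed-point structure. I therefore expect the honest outcome to mirror the paper's convention: prove the bound for all $m,k,t$ below a fixed threshold by reducing each instance to a polynomial feasibility problem dischargeable by a computer algebra system, and leave the uniform statement as the conjecture it is stated to be.
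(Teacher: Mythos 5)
First, note that the statement you are proving is Conjecture \ref{sadconjecture}: the paper offers no proof of it, only a heuristic justification (the numerical observation that the probability of playing $S$ decays rapidly with $m$ in the $(m,3)$ game, plus the expectation that deeper levels of the blow-up see even less play). So there is no argument in the paper to match; what matters is whether your proposal actually closes the gap, and it does not, as you yourself concede at the end. Your localization lemma is correct and is a genuine improvement on the paper's sketch: since $S$ and $P_k$ are both resolved as the $S$-surrogate at every level above $k$, the comparison does reduce to the conditional $(m,3)$-sub-game on the contender event $D$, and your conditional payoff formulas for $E_i(P_k\mid D,a,b,c)$ and $E_i(S\mid D,a,b,c)$ are the right ones. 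But the crux --- converting the coupled indifference conditions into the bound $P_i(T_p)/P_i(S)\ge m-1$ --- is exactly where the difficulty lives, and the corner-point method of equation \ref{big-eq-contra-3} does not transfer: that method works there because the quantity being shown nonvanishing is a single multilinear polynomial in the \emph{other} players' probabilities, whereas here player $i$'s own ratio is constrained only indirectly, through a fixed point coupling all players' supports and indifference relations. Deferring this to ``a family of polynomial feasibility problems'' is an honest plan for a finite verification, but as written the proposal establishes neither clause of the conjecture for any $m$.

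Second, your argument for the second clause is quantitatively wrong even granting everything else. The number of other players tying with $q_i$ on $P_k$ is at most $m-1$ (the all-$P_k$ contender profile), so exhibiting positive conditional mass off that profile only yields an expected co-winner count strictly less than $m-1$, not less than $m-2$ as the conjecture asserts; your claim that the conditional expectation ``equals $m-2$ only if the all-$P_k$ profile carries full conditional mass'' is false, since full mass on that profile gives $m-1$. To get below $m-2$ you need a quantitative lower bound on the conditional probability that at least one other contender plays something other than $P_k$, and neither Theorem \ref{hardprob} nor the strong-playability lemma supplies one. This is a concrete gap, not a presentational one.
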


The sketch of the reason for this conjecture is that as the number of players increases, the probability that a player plays $S$ in the symmetric Nash equilibrium goes down rapidly in the $3$ object game. To the point that with $20$ players, this ratio is over $50$. As the number of objects increases, we would expect this to occur less often. Moreover, we would expect that objects in higher-level blow-ups have a smaller proportion of play and that, as the number of players increases, the expected number of ties for a win when $q_i$ plays one of these objects is much less than all but one other player.

Assuming this conjecture, we can  prove the following: 

\begin{thm}
    The imbalanced $(m,2k+1)$-$RPS$ is a playable $RPS$ on $(2k+1)$ objects. Moreover, if we continued this construction infinitely, we would have a $(m,\N)$-$RPS$ that is also a playable $RPS$. Moreover, these games are strongly playable for up to $50$ players. 
\end{thm}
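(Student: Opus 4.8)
The plan is to prove the three assertions in increasing order of difficulty: first ordinary playability of the finite tower $(m,2k+1)$-$RPS$ for every $m$, then playability of the infinite construction as a limit, and finally the stronger statement that every Nash equilibrium has full support when $m\le 50$. Throughout I would exploit the recursive structure of the blow-up, writing $(m,2k+1)$-$RPS$ as $(m,3)$-$RPS$ blown up at $S$ with subgame $H=(m,2k-1)$-$RPS$, together with the fact that in the imbalanced $(m,3)$-$RPS$ the object $S$, and hence every object of $H$, wins only when no player plays the top object $R$. In particular every object of $H$ is beaten by $R$ simultaneously, so in the blown-up game the expected payoff of any subgame object $o$ splits as $P(\text{some opponent plays }R)\cdot(-1)$ plus $P(\text{no opponent plays }R)$ times the payoff of $o$ resolved inside $H$ among the entrants. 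The first term is common to all subgame objects, so the equilibrium-equalization condition restricted to the objects of $H$ is exactly the equilibrium condition for the induced subgame. This is the structural lever that lets the base-case arguments recurse.

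For plain playability I would induct on the tower height $k$, showing that every symmetric Nash equilibrium $\mu$ has full support; since a symmetric equilibrium exists by Nash's theorem, this yields a full-support equilibrium and hence playability for all $m$. The base case $k=1$ is the strong playability of $(m,3)$-$RPS$ already established. For the inductive step I would rerun the three deviation arguments of that base case at the top level. If $\mu(R_0)=0$, then any positive mass on $P_0$ is strictly dominated by entering $H$ (with no $R$ the $P_0$-players lose to any entrant), forcing $\mu(P_0)=0$ too; but then all play inside $H$ and a unilateral deviation to $R_0$ wins outright, a contradiction, so $\mu(R_0)>0$. If $\mu(P_0)=0$, then entering $H$ is strictly dominated by $R_0$, so all play $R_0$ and deviating to $P_0$ wins outright; hence $\mu(P_0)>0$. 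If no subgame object is played, the base-case argument collapses all play onto $P_0$ and a lone deviation into $H$ again wins outright, so $H$ is entered with positive probability. Finally, by the common-penalty splitting above, the same three deviation arguments apply one level down to the play inside $H$ — they are domination and win-outright arguments that hold for every realized number of entrants — so full support propagates into $H$, completing the finite case.

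For the infinite game $(m,\N)$-$RPS$ I would pass to the limit of the finite towers. The full-support equilibria produced above assign geometrically decaying mass to objects at depth $i$ (on the order of the base game's $s^{\,i}$ with $s<1$), so the masses stay strictly positive and summable; I would extract a convergent subsequence of the height-$k$ equilibria and check that the limiting symmetric profile is an equilibrium with positive mass on every object, the key point being that no deviation to an arbitrarily deep object is profitable because the advantage available at depth $i$ is itself exponentially small. The strong playability claim for $m\le 50$ is where Theorem~\ref{hardprob} and Conjecture~\ref{sadconjecture} enter: I would repeat the inductive scheme for an arbitrary, possibly asymmetric, equilibrium, using Theorem~\ref{hardprob} to guarantee that at least two players enter each successive subgame. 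The ``two players'' conclusion is exactly what rules out the escape described after Theorem~\ref{hardprob}, in which a single $S$-player reallocates all his bottom-level mass onto a $P$-type object while every other player finds the residual incentive to descend smaller than a fixed $\epsilon$. Conjecture~\ref{sadconjecture} then supplies the quantitative control: the ratio bound $\frac{P_i(T_p)}{P_i(S)}\ge m-1$ and the bound on the expected number of ties ensure that enough $P$-mass survives at each level to keep the deepest $S$ strictly worth playing, so full support reaches the bottom-most object.

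The main obstacle I anticipate is the bottom of a deep tower in the asymmetric case. Because the incremental advantage created by the deepest level decays like $3^{-k}$, the equalization of payoffs near the bottom is extremely delicate, and a priori an equilibrium could starve the lowest $S$ of play while remaining a genuine equilibrium, precisely as in the counterexample sketched after Theorem~\ref{hardprob}. Converting the qualitative ``at least two players play $S$'' of Theorem~\ref{hardprob} into the quantitative statement that each subgame is entered by two players carrying enough $P$-mass to force the next descent is the crux, and it is what confines the unconditional result to $m\le 50$, where Theorem~\ref{hardprob} is verified by computer algebra and Conjecture~\ref{sadconjecture} is assumed. A secondary difficulty is the random number of entrants in each activated subgame: the induced play inside $H$ is an equilibrium of a mixture over instances of $H$ with different player counts rather than of a single fixed instance, so one must verify that the domination relations driving the recursion hold pointwise in the number of co-players and therefore survive this mixing.
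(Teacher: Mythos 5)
Your overall architecture is the paper's: induct on tower height via the blow-up decomposition, run three-case domination arguments at the topmost level with an unplayed object, invoke Theorem \ref{hardprob} to force two players into each successive subgame, invoke Conjecture \ref{sadconjecture} to rule out a lone $S$-player at the deepest level, and fall back to symmetric equilibria (which need only one $S$-player) to get playability for all $m$. The one load-bearing step where you diverge is the ``common-penalty splitting,'' and as stated it does not hold: conditional on no opponent playing $R$, the equalization condition among the objects of $H$ is \emph{not} exactly the equilibrium condition of $H$, because the set of entrants into $H$ is random and the winners' payoff $\frac{m-m'}{m'}$ is computed from the full player count $m$ rather than from the number of entrants, so what you actually face is a mixture over instances of $H$ with varying player counts and inflated stakes. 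You flag this as a ``secondary difficulty,'' but it is the crux, and the paper resolves it by going in the opposite direction: rather than restricting an equilibrium of $RPS_{j+1}$ to the entrants of $H$, it \emph{projects} the full $m$-player profile of $RPS_{j+1}$ down to $RPS_j$ (collapsing all of level $j+1$ onto $S$), applies the induction hypothesis to that genuine $m$-player game, and lifts the resulting improving deviation back up through an explicit map $p^{-1}$ that distributes the deviator's $S$-mass in proportion to what the other level-$(j+1)$ player does. If you retain the restriction-to-entrants formulation you must verify every domination relation pointwise in the number of co-players, which amounts to reproving that projection lemma.

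Two smaller discrepancies. First, your compactness argument for $(m,\N)$-$RPS$ is more attentive to existence than the paper (which only shows that any equilibrium of the infinite game has full support), but ``the masses stay strictly positive in the limit'' requires a lower bound on the depth-$i$ mass that is uniform over the tower height $k$, not just positivity within each finite tower; without it a subsequential limit can starve a level. Second, your deviation targets in the degenerate cases are underspecified: in the paper's case analysis the profitable deviation from an unplayed-$R_l$ level is sometimes to $S$ (when all others sit on $P$-type objects) and sometimes to $R_{l+1}$, and a single fixed object of $H$ is not dominant against all opposing profiles, so the ``strictly dominated by entering $H$'' claims need that finer split.
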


\begin{proof}
Before we begin, we will set some terms so as to clarify the proof. Objects in the $i$th blow up that are not in the $(i+1)$th blow up will be referred to as objects in level $i$. For example, the level $1$ objects are $R$ and $P$ in the game, and the level $k$ objects are $R_k,P_k$, and $S$. We will also refer to the imbalanced $(m,2i+1)$-$RPS$ as $RPS_i$. We will first demonstrate strong probability for fewer than $50$ players, and in doing so, provide an argument for the existence of a symmetric Nash equilibrium to prove playability for any number of players.

Above, we showed that $RPS_3$ is strongly playable and that at least two players must play $S$ in this game with fewer than $50$ players. Assume by induction that for some $j$, $RPS_j$ is strongly playable where at least two players play $S$. If we can show that $RPS_{j+1}$ is strongly playable where at least two players play $S$, the theorem is proven for all $k\in \N$. Note that for the symmetric case, it will only be necessary to assume that at least one person plays $S$ in each Nash equilibrium of $RPS_j$. The base case for this is true for any number of players.   

We will start by showing the following lemma:

\begin{lem}
    For any $k\in \N, m<50$, in any Nash equilibrium, $N$, of $RPS_k$, at least two players play some objects in level $k$ with positive probability. For any symmetric Nash equilibrium of $RPS_k$, $N_s$, all players play some object in level $k$ with positive probability. 
\end{lem}

\begin{proof}

Firstly, we must show that in $RPS_j$ at least two players must play an object at level $(j+1)$. By contradiction, assume that in a Nash equilibrium, $N$, at most one player plays in level $(j+1)$. Then, we can project each probability vector of each player to probability vectors in $RPS_j$. Let $P_j(N)$ be the projection. These mixed strategies cannot be a Nash equilibrium as, at most, a single player plays $S$. Therefore, in $RPS_j$, some player $q$ can improve their expected payoff by changing their probability vector $p(v_q)$ to some new probability vector $v'_q$. Define $p^{-1}(v'_q)$ as the probability vector in $RPS_{j+1}$ that agrees with $v'_q$ on all probabilities for objects in $RPS_j$, and if no other player played in level $(j+1)$, then $p^{-1}(v'_q(S))$ gives all remaining probability to the object $S$ in said $(m,3)$-$RPS$. If another player played in this level, then $p^{-1}(v'_q(S))$ exactly mirrors their relative proportions for $r_k,p_k,S$ such that:
\begin{equation}
p^{-1}(v'_q(S))(R_k)+p^{-1}(v'_q(S))(P_k)+p^{-1}(v'_q(S))(S)=v'_q(S)
\end{equation}

When exactly two people play at this level, this strategy mirrors the outcome on average as if both players played $S$ in level $k$, with their proportions corresponding to the sum of their proportions for $S, R_{k+1}, P_{k+1}$. Therefore, $q$'s expected payoff with $v_q$ is the same as their expected payoff with $p(v_q)$, which is strictly less than their expected payoff with $v'_q$, which is the same as their expected payoff with $p^{-1}(v'_q(S))$. So, $N$ cannot be a Nash equilibrium.

For a symmetric Nash equilibrium, $N_s$, either no one plays at this level or everyone does. However, by a similar induction, this produces a Nash equilibrium of $RPS_{j}$ where no one plays $S$, which is a contradiction. In such a case, we can conclude that all players play at level $(j+1)$. 
\end{proof}

We next prove the following lemma, which, under the assumptions of the previous lemma, shows that all objects in levels $l<k+1$ must be played by some player with positive probability in every Nash equilibrium. 

\begin{lem}[Playing at level $k$ implies playing at all other levels]
    In a Nash equilibrium, $N$, if at least two players, $q_0\dots q_i$, play at level $(k+1)$ with positive probability, then every object in levels $l<k+1$ must be played with positive probability.
\end{lem}

\begin{proof}
By contradiction, in Nash equilibrium, $N$, assume $l<k+1$ is the highest level with an object with no probability of being played. As the $(k+1)$ level has positive probability of being played by at least two people, we have three cases: 
\begin{enumerate}
    \item In level $l$, all objects have a zero probability of being played.
    \item The probability of $P_l$ being played is positive and the probability of $R_l$ being played is $0$.  
    \item The probability of $R_l$ being played is positive and the probability of $P_l$ being played is $0$.
\end{enumerate}

For case $1$, if level $l$ has nothing being played, then by assumption, there are at least two people playing in higher levels. For player $q_0$ playing at a higher level than $l$, moving their probability for all higher level items to $R_l$ is weakly dominating, as in any situation in which $q_0$ would have won before, they still win and get at least as high a payoff, and in any situation in which $q_0$ loses, they get the same payoff. As other players $q_1\dots q_i$ also play objects with positive probability above level $l$, they must play objects that win or tie some portion of the time when $q_0$ also plays at this level. If not, then by symmetry, any one of these players, $q_j$, could mirror $q_0$'s strategy. If they were to do so, then in a scenario when an object at or above this level would have won previously, and only $q_0$ is also playing an object at or above this level, then player $q_j$ would get $\frac{m-1}{2}$ expected value instead of $-1$. In a scenario where an object at or above this level would win, and only they and a subset of other players in $q_1\dots q_i$  are playing at or above this level, then player $q_j$ would receive an expected payoff of $m-1$. Finally, previously in any scenario in which an object at or below this level would have won, and $q_0$ and some non-empty subset of other players $q_1\dots q_i$ play player $q_j$'s expected value before changing strategies was $-1$. Therefore, their expected value in any outcome in which they play at or above this level increases after this change, with a strict increase in certain scenarios with non-zero probability. Thus, case $1$ could not occur in a Nash equilibrium.

For case $2$, if $P_l$ has a probability of being played but $R_l$ does not, then any player playing $P_l$, $q_0$, moving their probability to an object in higher levels, is weakly dominating, as they would lose in any scenario when at least one other player plays any other object above this level. In the scenario when all other players play only $P_{l+1}$ at or above this level, then by moving the probability from $P_l$ to $S$, the expected outcome for $q_0$ when an object at or above this level wins and $q_0$ and another player plays at or above this level goes from $-1$ to $m-1$. In any other scenario, when the other players play some set of strategies at or above this level, excluding $R_l$, then $q_0$ playing $R_{l+1}$ is a strict improvement. To see this, consider any scenario where at least one player other than $q_0$ also played at this level, and they played $P_{l+1}$. In the previous case, the expected outcome is $-1$, and in this case, it is at least $-1$. However, in any scenario where $q_0$ playing $P_l$ would have won, player $q_0$ playing $R_{l+1}$ now wins alone for a higher than or equal expected value. Moreover, in the scenario where $q_0$ plays $R_{l+1}$ instead of $P_l$ and any object other than $P_l$, and $P_{l+1}$ at or above level $l$ would have won, now $R_{l+1}$ wins. As in this case $q_0$ would have lost before, they have a strictly higher expected value in this scenario. Therefore, $q_0$'s expected payoff in any outcome in which they play at or above this level increases after this change, with a strict increase in certain scenarios with non-zero probability. Thus, case $2$ could not occur in a Nash equilibrium.

For case $3$, if $R_l$ had a probability of being played but $P_l$ did not, then playing any object at levels above $l$ is weakly dominated by playing $R_l$. As at least two players play at the $(k+1)$ level, if player $q_0$ playing at this level shifted his probability of playing above level $l$ to $R_l$, this would be a strict improvement. To see this consider any instance in which $q_0$ and at least one other player would have played above level $l$, and such an object won. Then after changing strategies, $q_0$ would win. As before, if $q_0$ had already won in each of these scenarios, then any other player mirroring $q_0$'s strategy would receive a net higher expected value, and that could also not be a Nash equilibrium. Thus, case $3$ could not occur in any Nash equilibrium.
\end{proof}

Therefore, until the $(k+1)$ level, every object must have some positive probability of being played. In level $(k+1)$, as at least two people play in this $(m,3)$-$RPS$, if $S$ had no probability of being played, then all players would play $P_{k+1}$ at this level, but then playing $S$ would have a higher expected value as whenever more than two players play in level $(k+1)$, the player playing $S$ would always win. Similarly, if $R_{k+1}$ is not played, then $S$ would be the only object at level $(k+1)$ that is played, and if $P_{k+1}$ is not played, then $R_{k+1}$ would be the only object that is played again leading to a contradiction as in the base $(m,3)$-$RPS$ case. 

As this holds for the symmetric Nash equilibrium with any number of players, this finishes the argument that $(m,2k+1)$-$RPS$ is playable by showing that the symmetric equilibria are all playable. We have also shown under the inductive assumption that $RPS_{k+1}$ is strongly playable with under $50$ players, leaving only to show that at least two players play $S$ in this game. 
Consider by contradiction that at most one player, $q_0$, played $S$ in the $(k+1)$-level. We will show that no player would then play $R_{k+1}$, which contradicts the argument above. We can see that player $q_0$ never plays $R_{k+1}$ because it would only win if an object at this level won, and all other players at this level only played $R_{k+1}$. In any such scenario, $q_0$ playing $P_{k+1}$ has a strictly higher expected value.

By the conjecture, \ref{sadconjecture}, the ratio of the proportion said player plays $S$, $P_0(S)$, to the proportion the play an $P$-type object, $P_0(T_p)$ is:
\begin{equation}
    \frac{P_0(S)}{P_0(T_p)}\geq \frac{1}{m-1}
\end{equation}

Given that an object in level $(k+1)$ won, it must be the case that player $q_0$ played either a $P$-type object or $S$. For any other given player $q_i$, let the proportion of the time, that they would have won if they played $P_{k+1}$ given that, if they played in this level an object in this level would have won, be given by $P(W_{P_{k+1}}|W_{k+1})$, and similarly define $P(W_{R_{k+1}}|W_{k+1})$. Moreover, we can see that if $q_i$ chooses an object in this level, it would win only if all other players play in this level or $P$-type objects. Furthermore, given that all players other than $q_0$, and $q_i$ play $P$-type objects or objects in this level, as each player plays independently, the scenarios in which $R_{k+1}$ would have won are precisely those in which $q_0$ plays $S$ and $P_{k+1}$ would have won in exactly the situations in which $q_0$ plays any other $P$-type object. We can then see that:
\begin{equation}
    \frac{P(W_{P_{k+1}}|W_{k+1})}{P(W_{R_{k+1}}|W_{k+1})}=\frac{P_0(S)}{P_0(T_p)}\geq \frac{1}{m-1}
\end{equation}

Let $V_{R_{k+1}}$,$V_{P_{k+1}}$ be the expected payoff for $R_{k+1}$ if it wins and $P_{k+1}$ if it wins respectively. We can then see that, given that all players other than $q_i$ play $P$-type objects or objects in this level, the expected payoff for $q_i$ playing $R_{k+1}$, $P_{k+1}$ is:

\begin{equation}
    \begin{split}
    E_i(R_{k+1}|W_{k+1})= V_{R_{k+1}}\left(\frac{1}{m}\right)-1\left(\frac{m-1}{m}\right) \\
    E_i(P_{k+1}|W_{k+1})= V_{P_{k+1}}\left(\frac{m-1}{m}\right)-1\left(\frac{1}{m}\right)    
    \end{split}
\end{equation}
As by the conjecture \ref{sadconjecture}, $\frac{1}{m-1}<V_{P_k}$, and by construction $V_{P_{k+1}}\leq m-1$, we have that:
\begin{equation}
    \begin{split}
         E_i(P_{k+1}|W_{k+1}) &>\frac{1}{m-1}\left(\frac{m-1}{m}\right)-1\left(\frac{1}{m}\right)\\
         &=0\\
         &=(m-1)\left(\frac{1}{m}\right)-1\left(\frac{m-1}{m}\right)\\
         &\geq E_i(R_{k+1}|W_{k+1})
    \end{split} 
\end{equation}
We can then see that for all players other than $q_0$, $E_i(P_{k+1}|W_{k+1})>E_i(R_{k+1}|W_{k+1})$, and therefore $E_i(P_{k+1})>E_i(R_{k+1})$. Thus, no player plays $R_{k+1}$ in any such Nash equilibrium, and we have a contradiction.

\end{proof}

Finally, we will show that $(\N,m)$-$RPS$ is playable for all $m\in \N$, and strongly playable for $m<50$. 

\begin{proof}
Consider the case that in some Nash equilibrium, $N$, there is a level $l$ such that all higher levels are empty. Then the strategies in the $(\N,m)$-$RPS$ map over exactly to strategies and expected payoffs in the imbalanced $(2l+3,m)$-$RPS$, where no one plays in level $(l+1)$. If there were fewer than $50$ players or if the equilibrium was symmetric, then this is a contradiction. If, on the other hand, there is no such level, then by contradiction, assume that in $N$, $l$ is the highest level with an object with no probability of being played.

As before, we have the three cases: 
\begin{enumerate}
    \item In this level, all objects have a zero probability of being played.
    \item The probability of $P_l$ being played is positive and the probability of $R_l$ being played is $0$.
    \item The probability of $R_l$ being played is positive and the probability of $P_l$ being played is $0$.
\end{enumerate}

Each of these cases resolves as they have in the finite case, with the only change being that instead of choosing $S$, one could choose $R_f$ for some high enough $f$ whenever applicable. Therefore, we can conclude that $(\N,m)$-$RPS$ is playable and that given conjecture \ref{sadconjecture} with less than $50$ players is strongly playable.
\end{proof}

\phantomsection
\bibliographystyle{amsplain}
\bibliography{references}

\end{document}